\documentclass[aps,pre,floats,floatfix,superscriptaddress,nofootinbib,twocolumn,showkeys,longbibliography]{revtex4-2}
\usepackage{bbm}
\usepackage{amsmath,amssymb,amsthm}
\usepackage{xcolor}
\usepackage{graphicx}
\usepackage{caption}
\usepackage{makecell}
\usepackage{url}
\usepackage{answers}
\usepackage{array}
\usepackage[mathscr]{euscript}
\usepackage{eufrak}
\usepackage{calligra}
\usepackage{algpseudocode,algorithm}
\usepackage{subcaption}

\usepackage{amsthm}

\newtheorem{theorem}{Theorem}
\newtheorem{lemma}[theorem]{Lemma}

\usepackage[colorlinks=true, linkcolor=blue, citecolor=blue, urlcolor=blue]{hyperref}
\usepackage{dcolumn}% Align table columns on decimal point
\usepackage{bm}% bold math
\usepackage{booktabs}
\newcommand{\ZZZ}[2]{}

\newcommand{\quotes}[1]{``#1''}
\newtheorem{property}{Property}
\newtheorem{problem}{Problem}

\begin{document}

\preprint{APS/123-QED}

\title{\textbf{Flow Subgraphs and Flow Network Design under End-to-End Power Dissipation Constraints} 
}% 

% \author{Zhihao~Qiu}
%  % \altaffiliation[Also at ]{Physics Department, XYZ University.}%Lines break automatically or can be forced with \\
% \author{S\'amuel G. Balogh}%
% \affiliation{the Faculty of Electrical Engineering, Mathematics and Computer Science, Delft University of Technology, 2600 GA Delft, The Netherlands.
% }%

% \author{Xinhan~Liu}
% \affiliation{
%  the Faculty of Electrical Engineering, Mathematics and Computer Science, Delft University of Technology, 2600 GA Delft, The Netherlands.
% }%
% \affiliation{
%  second institution for this author
% }%
% \author{Piet Van Mieghem}
% \affiliation{%
%  Authors' institution and/or address\\
%  This line break forced with \textbackslash\textbackslash
% }%

% \author{Maksim~Kitsak}
% \email{Contact author: m.a.kitsak@tudelft.nl}
% \affiliation{%
%  Authors' institution and/or address\\
%  % This line break forced with \textbackslash\textbackslash
% }%

\author{Zhihao~Qiu}
\affiliation{
Faculty of Electrical Engineering, Mathematics and Computer Science,
Delft University of Technology,
2600 GA Delft, The Netherlands}

\author{Xinhan~Liu}
\email{Contact author: X.Liu-22@tudelft.nl}
\affiliation{
Faculty of Electrical Engineering, Mathematics and Computer Science, 
Delft University of Technology, 
2600 GA Delft, The Netherlands}

\author{Rogier Noldus}
\affiliation{
Faculty of Electrical Engineering, Mathematics and Computer Science, 
Delft University of Technology, 
2600 GA Delft, The Netherlands}
\affiliation{Ericsson, The Netherlands}

\author{Piet~Van~Mieghem}
\affiliation{
Faculty of Electrical Engineering, Mathematics and Computer Science, 
Delft University of Technology, 
2600 GA Delft, The Netherlands}

% \collaboration{CLEO Collaboration}%\noaffiliation

\date{\today}% It is always \today, today,
             %  but any date may be explicitly specified

\begin{abstract}
We investigate how the underlying graph of a network supports a flow between a source node and a destination node and propose to compute the expected number of nodes and links that contribute to transferring items in random graphs.   
Since the transportation is associated with a \quotes{cost} or \quotes{power dissipation}, we further address how to construct a graph given predetermined end-to-end power dissipation, which can be reduced to the \quotes{inverse effective resistance problem} that asks for a weighted graph in which the effective resistance matrix equals a predetermined demand matrix.
We propose a heuristic algorithm, \quotes{Resistor Gap Pruning} (RGP), which provides sparse graphs closely approximating the demand effective resistance and which shows stable performance across different demand scenarios.
\end{abstract}

\keywords{flow network, flow subgraph, effective resistance, power dissipation.}%Use showkeys class option if keyword
                              %display desired
\maketitle

\section{Introduction}\label{sec:Introduction}
A network is characterized by the underlying graph and the functional process of the network~\cite{newman2003structure,van2014performance}.
A graph $G$ consisting of a set $\mathcal{N}$ of $N$ nodes that are connected by a set $\mathcal{L}$ of $L$ links and the function of a network is generally related to the transport of items over the underlying graph~\cite{van2010framework,PVM_GraphSpectra2023,qiu2023inverse}.
Depending on the type of transported items, a network can be classified as either a \quotes{path network} in which packets (e.g., IP packets, vehicles) are propagated, or a \quotes{flow network}, e.g., electrical networks, gas networks, water networks, etc, in which the transported item constitutes a flow.

In a path network, the transport of items between two nodes $(i,j)$ travels along a single path $\mathcal{P}_{ij}$, defined as a succession of links \cite{PVM_GraphSpectra2023,van2014performance} of the form $\mathcal{P}_{ij} = \{n_1\sim n_2, n_2\sim n_3, \dots, n_{k-1} \sim n_k\}$, where $k-1$ is the hop count of the path and where node label $n_1=i$, $n_k=j$ and $n_a \neq n_b$ for each index $a$ and $b$ of a link $n_a \sim n_b$ in $\mathcal{P}_{ij}$. 
In most practical scenarios, the transport follows the \quotes{the shortest path} $\mathcal{P}^*_{ij}$, which minimizes the sum of all link weights along the path.
A well-known example is the Open Shortest Path First (OSPF) protocol~\cite{fortz2000internet, van2014performance} in data communication networks, where routers in an IP network maintain a list of IP destinations and the \quotes{next hop} toward each destination. 
The next hop is the link selected by the router to route an IP packet over the shortest path towards the destination.
The shortest path problem has been extensively studied in literature~\cite{van2014performance, dijkstra1959note,bellman1958routing,schrijver2012history,cormen2022introduction,kitsak2023finding,gomathi2018energy, zhang2024mapreduce,zhan1998shortest,begtavseviu2001measurements,qiu2022efficient}. 

Many practical scenarios, however, cannot be adequately described by the shortest path alone~\cite{akara-pipattana_resistance_2022,NEWMAN200539,freeman_centrality_1991,BOZZO2013460,barabasi2013network}.
Multiple alternative paths, or even random paths, are considered when the shortest paths are unknown or unavailable.
For example, next-generation 6G communication systems aim to enable information to be divided into smaller units.
These smaller units are transmitted simultaneously over diverse paths, including satellite, fiber-optic and wireless links, to maximize coverage, reliability and transmission efficiency~\cite{s22093136,8766143}.
Other examples include the epidemic spreading~\cite{van2014performance,pastor2015epidemic,nowzari2016analysis}, where infections can propagate through all available connections.
Similarly, the propagation of news or rumors in a social network generally does not follow the shortest path from a source node to a destination node, but rather resembles a random walk process~\cite{Bozzo2012Approximations,NEWMAN200539,MASUDA20171,pearson1905problem}.
If the shortest path is not sufficient for information transfer, then flow networks provide a complementary perspective, since the transport propagates over all possible paths from node $i$ to node $j$ in a flow network.

In this work, we focus on the flow network using the current-voltage analogy.
Items are transferred following the principle of current flow through an electrical network, where links are resistors and nodes are junctions between the resistors.
We first investigate how the flow transmits through a flow network.
Specifically, we study how the underlying graph of a network supports a flow between the source node and the destination node and which nodes and links are included in the transmission.
We refer to the nodes and links that contribute to transferring flow as the \quotes{flow subgraph}.
We propose a method to compute the expected number of nodes and links in the flow subgraph for Erd\H{o}s--R\'enyi (ER) random graphs.   

The flow over nodes and links in an underlying graph also affects the \quotes{power dissipation} in the links of the graph. 
The investigation of the power dissipation is inspired by communication networks, where higher data transmission through a link generally leads to higher cost for that connection~\cite{guennebaud2024energy}; the higher cost may be installation cost, e.g., fibre optic cable, as well as operational cost, e.g., energy consumption of routers or repeaters.
Similar examples can be observed in social networks~\cite{holme2009diplomat,jackson2007study} and transportation networks~\cite{hassidim2013network,aldous2008optimal}. In this work, we address the challenge of constructing a flow network whose total power dissipation, induced by flow transmission between various source–destination node pairs, approximately matches the predetermined demands. 
% We demonstrate that transferring items in accordance with the \quotes{resistance} of each path between the source node and the destination node provides a more evenly spread load on the links in the network and may lead to a reduction in power consumption compared to the transmission following one single path.
% The power dissipated on each link in ER graphs holds a power law decay with respect to the size of the graph and link density. 

% Finally, we address the challenge of constructing a flow network whose total power dissipation, induced by flow transmission between various source–destination node pairs, approximately matches the predetermined demands.

The outline of the paper is as follows:
In Section~\ref{sec:Notation}, we introduce terminology and background knowledge on flow networks.
In Section~\ref{sec:Sizeoftheflowsubgraph}, we investigate the links and nodes that contribute to information/data transmission in flow networks.
A method is proposed to compute the expected number of nodes and links that contribute to transferring items in random graphs.
% We then present results about the power dissipation on both network level and link level in Section~\ref{sec: power dissipation in flow networks}.
In Section~\ref{sec:Network construction with end-to-end demands on power dissipation}, we consider an inverse problem of power dissipation, which asks for constructing a flow network given predetermined demands on the end-to-end power dissipation. 
A heuristic approach named Resistor Gap Pruning (RGP) is proposed and tested, given different demand matrices.
Finally, we summarize our results in Section~\ref{sec:Conclusion}.

\section{Terminology}\label{sec:Notation}
% Complex networks have two general features: a graph defining the underlying topology and a service or function specified by a dynamic process \cite{van2010framework}. 
In this Section, we introduce the terminology and refer to Appendix~\ref{app:Notation list} for the symbol list.

\subsection{Graph}
\label{sec:graph}
We limit ourselves to connected, undirected, simple graphs\footnote{A simple graph has no multiple links between a same pair of nodes and the graph does not contain any self-loops, i.e. $a_{ii}=0$ for each node $i\in\mathcal{N}$.} \cite{PVM_GraphSpectra2023}.
We introduce an $N \times N$ adjacency matrix $A$ to represent a graph, with element ${a}_{ij}=1$ if there is a link between node $i$ and node $j$, otherwise ${a}_{ij}=0$.
The link between nodes $i$ and $j$ is represented by $l = i\sim j$.
Each link $l\in \mathcal{L}$ has a positive weight $w_l$ and the $N\times N$ link weight matrix is denoted by $W$. 
The $N\times N$ weighted adjacency matrix is defined as $\widetilde{A}=W \circ A $, where the Hadamard product $\circ$ defines a direct elementwise multiplication  $\widetilde{a}_{ij}=w_{ij}a_{ij}$ and the \quotes{tilde} refers to a weighted graph matrix.
In our setting, $\widetilde{a}_{ij}=0$ means that there is no link between nodes $i$ and $j$, because we exclude zero link weights, i.e. $w_{ij} > 0$, in order to avoid the complication that a zero weight $w_{ij} = 0$ would physically mean that nodes $i$ and $j$ are the same.

\begin{figure}[!htbp]
  \centering
  \includegraphics[scale=0.25]{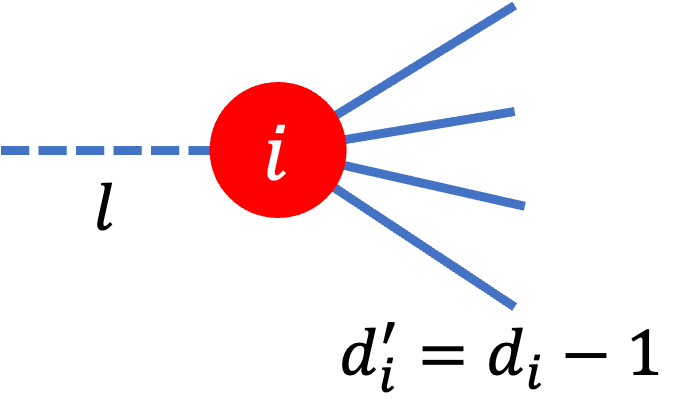}
  \caption{Illustration of the excess degree $d_i'$. When node $i$ is reached by following a randomly chosen link $l$, the remaining number of neighbors of node $i$ is $d_i' = d_i - 1$.
}
\label{fig:excess_degree}
\end{figure}

The degree $d_i$ of node $i$ equals the row sum of the adjacency matrix $A$, i.e. $d_i = \sum_{k=1}^N a_{ik}$, which represents the number of neighbors of node $i$.
The excess degree $d^\prime_i$ of node $i$ is defined~\cite{newman2010networks, PhysRevE.64.026118} as the number of remaining links after reaching node $i$ by a random link $l$.
In other words, for node $i$ with degree $d_i$, the excess degree $d^\prime_i = d_i-1$, as shown in Fig.~\ref{fig:excess_degree}.
Suppose a graph has degree distribution $\Pr[D=k]$.
The degree probability generating function (pgf) is defined as
\[
\varphi_D(z)=E[z^D]=\sum_{k=0}^{N-1}\Pr[D=k]\,z^k,
\]
which encodes the full distribution of the degree $D$ into an analytic function~\cite[Chapter 2]{van2014performance}.
The excess degree distribution~\cite{newman2010networks} is then 
\begin{equation}
    \text{Pr}\left[D^\prime = k\right] = \frac{k+1}{E[D]}\text{Pr}\left[D = k+1\right],
    \notag
\end{equation}
where $E[D]$ is the expected degree of the graph.
The pgf of the distribution of excess degree $D'$ is
\begin{equation}
    \varphi_{D^\prime}(z) = E\left[ z^{D^\prime}\right] = \sum_{k=0}^{N-1}\Pr[D^\prime=k]z^k = \frac{\varphi^\prime_{D}(z)}{\varphi_{D}^\prime(1)}
    \notag
\end{equation}

A directed graph can also be represented by an $N\times L$ incidence matrix $B$ with elements
\begin{equation}
    b_{il}=
    \left\{
    \begin{array}{rcl}
    1       &      & \text{if link $l = i \rightarrow j$}\\
    -1     &      & \text{if link $l = j \rightarrow i$}\\
    0     &      & \text{otherwise}
    \end{array} \right.
    \notag
\end{equation}
where link $l = i \rightarrow j$ denotes the direction from node $i$ to node $j$.
The sum of the columns in an incidence matrix equals zero, i.e. $u^TB=0$, where $u$ represents the $N\times 1$ all-one vector. 
An undirected graph can be represented by an $N\times (2L)$ incidence matrix, where each link $i\sim j$ is counted twice, once for direction $i \rightarrow j$ and once for direction $j \rightarrow i$. In that case, the degree of each node is just doubled~\cite{PVM_GraphSpectra2023}. 
% The $N\times 1$ degree vector $d=A u$ contains the degree of each node, with $d_i$ denoting the node $i$ degree and $u$ representing the $N\times 1$ all-one vector. 
% The $N\times N$ degree diagonal matrix $\Delta = \text{diag}(d)$ contains the node degrees on its main diagonal.

% A graph $G\left(\mathcal{N},\mathcal{L}\right)$ consists of a set $\mathcal{N}$ of $N=|\mathcal{N}|$ nodes, interconnected via a set $\mathcal{L}$ of $L=|\mathcal{L}|$ links. Node-pair connections are captured by the $N\times N$ adjacency matrix $A$, where $a_{ij}=1$ if node $i$ and node $j$ are connected, otherwise $a_{ij}=0$. In this work, we focus on connected, simple\footnote{A simple graph is a graph without self-loops and multiple links.} graphs. 
% The $N\times 1$ degree vector $d=A  u$ contains the degree of each node, with $d_i$ denoting the node $i$ degree, where the $N\times 1$ all-one vector is denoted as $u$. The $N\times N$ degree diagonal matrix $\Delta = \text{diag}(d)$ contains node degrees on its main diagonal.

\subsection{Laplacian matrix Q}\label{sec:Laplacian}
For an undirected graph, the $N\times N$ Laplacian matrix $Q$ reveals the relation~\cite{PVM_GraphSpectra2023} between the adjacency matrix $A$ and the $N\times L$ incidence matrix $B$:
\begin{equation}
    \label{eq:unweightedLaplacian}
    Q = BB^T = \Delta - A
\end{equation}
where the $N\times N$ degree diagonal matrix $\Delta = \text{diag}(d)$ contains the node degrees on its main diagonal and the $N\times 1$ degree vector $d=A  u$ contains the degree $d_i$ of node $i$ as a component.
The eigenvalue decomposition \cite{van2017pseudoinverse} of the Laplacian $Q$ is
\begin{equation}\label{Eq:Q_Spectrum}
	Q = Z  \text{diag}(\mu)   Z^T,
\end{equation}
defining the set of $N$ orthogonal $N\times 1$ eigenvectors $z_i$ of $Q$, contained in columns of the $N\times N$ eigenvector matrix $K$ and $N$ eigenvalues $\mu_1 \geq \mu_2 \geq \dots \geq \mu_N = 0$ of $Q$.
The Laplacian is not invertible, because $\det Q = \prod_{k=1}^{N} \mu_k =0$ due to $\mu_N = 0$.
The pseudoinverse of the Laplacian \cite{van2017pseudoinverse}
\begin{equation}\label{Eq:Q_pinv}
	Q^{\dagger} = \sum_{k=1}^{N-1} \frac{1}{\mu_k}z_k z_k^T
\end{equation}
% Due to double orthogonality of the eigenvector matrix $K$ (i.e. $Z  Z^T = I$ and $Z^T  Z = I$), where $I$ is the $N\times N$ identity matrix, relation (\ref{Eq:Q_Spectrum}) can be transformed into a weighted sum of $N$ outer vector products
% \begin{equation}\label{Eq:Q_Spectrum_Sum}
% 	Q = \sum\limits_{i=1}^{N} \mu_i   z_i   z_i^T.
% \end{equation}
% As of any real, symmetric matrix \cite{PVM2010GraphSpectra}, the eigenvalues of Laplacian $Q$ are real and non-negative because $Q$ is a positive semidefinite matrix \cite[p.67]{PVM2010GraphSpectra}. 
% From $Q  u = 0$, we observe that $\mu_N = 0$ and $z_N = u$ and thus $\det Q = 0$. Consequently, the Laplacian $Q$ is not invertible. However, the pseudoinverse\footnote{We restrict the analysis to connected graphs, as the number of zero eigenvalues of Laplacian $Q$ equals the number of connected components in a graph. More precisely, relation (\ref{Eq:Q_pinv}) does not hold in the case of a disconnected graph.} 
% \begin{equation}\label{Eq:Q_pinv}
% 	Q^{\dagger} = \sum\limits_{i=1}^{N-1}\frac{1}{\mu_i}  z_i   z_i^T
% \end{equation}
obeys $Q^{\dagger}  Q = Q  Q^{\dagger} = I - \frac{1}{N}  J$, where $I$ is the identity matrix and $J$ is the $N\times N$ all-one matrix.
The weighted Laplacian matrix is defined as $\widetilde{Q}=\widetilde{\Delta}-\widetilde{A}$, where the $N\times N$ weighted degree diagonal matrix $\widetilde{\Delta} = \text{diag}(\widetilde{d})$ contains weighted node degrees on its main diagonal and the $N\times 1$ weighted degree vector $\widetilde{d}=\widetilde{A} u$ contains the weighted degree of each node.
Details of the Laplacian and pseudoinverse of the Laplacian are provided in \cite{PVM_GraphSpectra2023} and \cite{van2017pseudoinverse}.

\subsection{Electrical resistor network and flow subgraph}
\label{sec:Electrical resistor network}
In this work, the flow network is modeled analogously to a resistor network.
Consider a resistor network $G$ with $N$ nodes and $L$ links.
Each link $l = i \sim j$ between the nodes $i$ and $j$ possesses a resistor with resistance $r_l = r_{ij}>0$ and the weight of link $w_{l} = w_{ij} = \frac{1}{r_l}$.
Let $v_k$ denote the potential or voltage of node $k$ in the network and the current $y_l=y_{ij}$ through the resistor of link $l$ between node $i$ and node $j$.
The link current is directed so that $y_{ij} = -y_{ji}$. 
Ohm\textquotesingle s law defines the relation $v_i-v_j = r_{ij} y_{ij}$, which states that the resistance is the proportionality constant or ratio between the potential differences $v_i-v_j$ and the current $y_{ij}$.
The voltage vector and link current vector are related by
\begin{equation}\label{Eq:Ohm law}
	y = \text{diag}\left(\frac{1}{r_l}\right) B^Tv,
\end{equation}
where $y$ is the $L\times1$ link current vector, $v$ is the $N\times 1$ vector with nodal voltages and the $L\times L$ matrix $\text{diag}(\frac{1}{r_l})$ has diagonal elements $(\frac{1}{r_1},\dots,\frac{1}{r_2},\dots,\frac{1}{r_L})$ where $r_l = r_{ij}$ is the resistance of link $l = i \sim j$.
If the current through link $l = i \sim j$ is zero, then
\begin{equation}
\label{eq:yl=0}
    y_l = 0
\quad\Longleftrightarrow\quad
v_i = v_j
\end{equation}

Define $x_i$ as the external current injected into node $i$.
If the current leaves node $i$, then $x_i<0$, otherwise $x_i>0$.
Kirchhoff’s current law shows that
\begin{equation}
    x = By,
    \label{Eq:Kirchhoff law}
\end{equation}
where the $N\times1$ node external current vector $x$ contains the external current injected into each node $x_i$.
Substituting (\ref{Eq:Ohm law}) into Kirchhoff’s current law (\ref{Eq:Kirchhoff law}) relates the external current and voltage,
\begin{equation}
    x = B\text{diag}\left(\frac{1}{r_{ij}}\right)B^Tv = \widetilde{Q}v
    \label{Eq:external current vs voltage}
\end{equation}
where the weighted Laplacian
\begin{equation}
    \widetilde{Q} = B\text{diag}\left(\frac{1}{r_{ij}}\right)B^T
\end{equation}
is defined similarly to the unweighted Laplacian decomposition in \eqref{eq:unweightedLaplacian}.

Although matrix relation $x = \widetilde{Q}v$ in (\ref{Eq:external current vs voltage}) cannot be inverted due to $\text{det}~\widetilde{Q} =0$, Van Mieghem~\cite{PVM_GraphSpectra2023} shows that   
\begin{equation}
    v = \widetilde{Q}^\dagger x,
    \label{Eq:voltage vs external current}
\end{equation}
by choosing the average voltage $\frac{1}{N}u^Tv$ equal to zero.
Substituting the voltage $v$ in (\ref{Eq:Ohm law}) with (\ref{Eq:voltage vs external current}) couples the external current vector and link current vector:
\begin{equation}\label{Eq:external current vs inject current}
	y = Cx,
\end{equation}
where the $L\times N$ matrix $C = \text{diag}\left(\frac{1}{r_l}\right) B^T \widetilde{Q}^\dagger$.

If we inject a unit current $I_c = 1$ Ampere at the source node $i$, which leaves the network $G$ at the destination node $j$, then the link current
\begin{equation}\label{Eq:yvsxgivenst}
    y = C(e_i-e_j),
\end{equation}
where $e_k$ represents an $N\times 1$ basic vector that has only one non-zero element $(e_k)_k = 1$.
For a link $l = m \sim n$, \eqref{Eq:yvsxgivenst} and $C = \text{diag}\left(\frac{1}{r_l}\right) B^T \widetilde{Q}^\dagger$ lead to
\begin{align}
\label{Eq:ylvsxgivenst}
    y_l & = \frac{1}{r_l}(e_m-e_n)^{T}\widetilde{Q}^\dagger(e_i-e_j) \notag\\
    & = \frac{1}{r_l} \left(\widetilde{Q}^\dagger_{mi} - \widetilde{Q}^\dagger_{mj} - \widetilde{Q}^\dagger_{ni} +\widetilde{Q}^\dagger_{nj}\right)
\end{align}

For a given source-destination node pair $(i,j)$, the flow subgraph $G^*_{ij}$ is defined as a subgraph of the resistor network $G$ through which current propagates.
Specifically, the flow subgraph $G^*_{ij}$ consists of all links carrying nonzero current, along with the nodes incident to these links. 
Formally, the link set $\mathcal L(G^*_{ij})$ of the flow subgraph $G^*_{ij}$ is
\[
\mathcal L(G^*_{ij})
:= \{\, l \in \mathcal L(G) \,|\, y_{l}\neq 0 \,\},
\]
and the node set $\mathcal N(G^*_{ij})$ is
\begin{align}
    \mathcal N(G^*_{ij}) \notag
&:= \{\, m\in\mathcal N(G) \,|\, \exists\, n\in\mathcal N(G)\\ &\text{ such that } l = m\sim n \in
\mathcal L(G^*_{ij}) \,\} \notag
\end{align}

% \mathcal N(G^*_{ij})
% := \{\, m\in\mathcal N(G) \,|\, \exists\, n\in\mathcal N(G) \text{ such that } l = m\sim n \in
% \mathcal L(G^*_{ij}) \,\}
% \]

\subsection{Effective resistance and power dissipation}\label{sec:Omega}
Consider the electrical resistor network $G$ introduced in Section~\ref{sec:Electrical resistor network}, where each link $l = i \sim j$ possesses a resistor with resistance $r_{ij}=\frac{1}{w_l}$.
The ratio $\frac{v_i-v_j}{I_c}$  measures the resistance of a subgraph over which the injected current $I_c$ in node $i$ spreads towards node $j$ and $\omega_{ij}$ is called the \quotes{effective} resistance~\cite{PVM_GraphSpectra2023} between node $i$ and node $j$.
As demonstrated in \cite{PVM_GraphSpectra2023,van2017pseudoinverse}, the effective resistance $\omega_{ij}$ between node $i$ and node $j$ satisfies
\begin{equation}\label{Eq:omega_ij}
\omega_{ij} = \left(e_i - e_j\right)^{T}  \widetilde{Q}^{\dagger} \left(e_i - e_j\right)
\end{equation}
% where the $N\times 1$ basic vector $e_i$ has only one non-zero element $(e_i)_i = 1$.
Multiplying \eqref{Eq:omega_ij} out yields
\begin{equation}\label{Eq:omega_ij2}
\omega_{ij} = \widetilde{Q}^{\dagger}_{ii}+\widetilde{Q}^{\dagger}_{jj} - 2\widetilde{Q}^{\dagger}_{ij}
\end{equation}
Relation (\ref{Eq:omega_ij2}) can be transformed into a matrix form, defining the $N\times N$ effective resistance matrix 
\begin{equation}\label{Eq:Omega}
\Omega = \zeta   u^T + u   \zeta^T - 2  \widetilde{Q}^{\dagger},
\end{equation}
where the $N\times 1$ vector $\zeta = \left(\widetilde{Q}^{\dagger}_{11}, \, \widetilde{Q}^{\dagger}_{22}, \, \dots ,\, \widetilde{Q}^{\dagger}_{NN} \right)$ contains the diagonal elements of the pseudoinverse $\widetilde{Q}^{\dagger}$ of the weighted Laplacian $\widetilde{Q}$.

We express the link current $y_l$ \eqref{Eq:ylvsxgivenst} in terms of the effective resistance (\ref{Eq:omega_ij2}) as:
\begin{align}
\label{Eq:ylvsomega}
    y_l = \frac{1}{2r_l} \left(\omega_{mj}+\omega_{ni}-\omega_{mi} - \omega_{nj}\right)
\end{align}
and \eqref{Eq:ylvsomega} shows that
\begin{equation}
\label{eq:yl=02}
    y_l = 0
\quad\Longleftrightarrow\quad
\omega_{mj}-\omega_{mi} = \omega_{nj}-\omega_{ni}
\end{equation}

The effective resistance $\omega_{ij}$ between two adjacent nodes $i$ and $j$ (i.e. $a_{ij} = 1$), represents the effective resistance of a parallel connection
\begin{equation}\label{Eq:Omega_ij_parallel}
    \frac{1}{\omega_{ij}} = \frac{1}{r_{ij}} + \frac{1}{\left(\omega_{G\setminus l}\right)_{ij}},
\end{equation}
where $\left(\omega_{G\setminus l}\right)_{ij}$ is the effective resistance between node $i$ and $j$ in the graph $G\setminus l$ obtained from the graph $G$ after deletion of the link $l= i \sim j$.
The effective resistance $\omega_{ij}$ between adjacent nodes $i$ and $j$ in \eqref{Eq:Omega_ij_parallel} is upper bounded by the resistance $r_{ij}$ of the direct link between them
\begin{equation}\label{Eq:Omega_ij_modified}
    \omega_{ij} = \frac{r_{ij} \left(\omega_{G\setminus l}\right)_{ij}}{r_{ij} + \left(\omega_{G\setminus l}\right)_{ij}} \leq \min \left(r_{ij}, \, \left(\omega_{G\setminus l}\right)_{ij} \right)
\end{equation}
% Otherwise, if $a_{ij}=0$, then the effective resistance $\omega_{ij}$ is upper bounded by the sum of resistances of links forming the shortest path between the nodes. 

Inject a current $I_c$ in a network from source node $i$ and let it out from destination node $j$. 
The power dissipation $P_G$, i.e., the energy per unit time (in Watt), in the network is the sum of the power dissipated on each link,
\begin{equation}
    \label{eq:graphpowerdissipation}
    P_G = \sum_{l \in \mathcal{L}} P_l = \sum_{l \in \mathcal{L}} y_l^2r_l,
\end{equation}
which also equals the product of the square of the current and the effective resistance
\begin{equation}
    \label{eq:graphpowerdissipation2}
    P_G = I_c^2 \omega_{ij}
\end{equation}

\section{Size of the flow subgraph}
\label{sec:Sizeoftheflowsubgraph}
Given a flow network $G$ and a pair of nodes $(i,j)$, the transported items propagate only along the corresponding flow subgraph $G^*_{ij}$.  Indeed, the number of nodes and links in a flow subgraph $G^*_{ij}$ is generally determined by the underlying graph $G$.
For example, for a tree graph, the flow subgraph $G^*_{ij}$ between nodes $(i, j)$ is the same as the corresponding (shortest) path $\mathcal{P}^*_{ij}$. 
In a $2$-D lattice graph $G$, the flow subgraphs $G^*_{ij}$ are identical to the graph $G$ for every pair of nodes $(i, j)$. 

In this section, we propose a \quotes{self-consistent approach} to compute the expected number of nodes and links in the flow subgraphs $G^*_{ij}$.
The main idea of the self-consistent approach is inspired by the property: 
\begin{property}
    \label{property1}
    A node $k$, different from node $i$ and $j$, in the flow subgraph $G^*_{ij}$ between nodes $(i, j)$ has at least two neighbors belonging to the flow subgraph $G^*_{ij}$.
\end{property}
\begin{proof}
For an arbitrary node $k$ that belongs to the flow subgraph $G^*_{ij}$, the current $I_k$ flowing through node $k$ is not zero, because nodes outside $G^*_{ij}$ do not carry current. Kirchhoff's current law states that the total current flowing into node $k$ equals the total current leaving node $k$.
Hence, node $k$ must have at least two neighbors belonging to the flow subgraph $G^*_{ij}$.
\end{proof}

In graphs with i.i.d. continuous resistances (equivalently, i.i.d. continuous link weights), Property~\ref{property1} is also sufficient for a node (other than $i$ and $j$) to belong to the flow subgraph.
% For \textbf{weighted} graphs with independent continuous resistances, Property~\ref{property1} is not only necessary but also sufficient for a node (other than $i$ and $j$) to belong to the flow subgraph.
This statement follows from two auxiliary lemmas given in Appendix~\ref{app:B}: (i) the probability that two distinct nodes in the flow subgraph have identical potentials is zero, and (ii) a node connected to at least two flow-subgraph nodes with distinct potentials must itself belong to the flow subgraph.
In Fig.~\ref{sfig:weighted_flow_ex}, we present an example to show the flow subgraph of a graph with i.i.d. link weights.

For graphs with identical link weights (link resistances), Property~\ref{property1} remains only a necessary condition for a node (other than $i$ and $j$) to belong to the flow subgraph. 
Local structural symmetries may lead to identical electrical potentials at distinct nodes, so that a node $k$ can be adjacent to multiple flow-subgraph nodes and all incident links of $k$ carry zero current. 
An example is shown by Fig.~\ref{sfig:unweighted_flow_ex}, where equipotential neighbors (node $u$ and $v$) prevent the propagation of current through the intermediate node (node $w$).
Although Property~\ref{property1} is not a sufficient condition, in random graphs with the same link weights, configurations in which exact structural symmetries lead to equipotential nodes are non-generic and occur with small probability~\cite{erdos1963asymmetric}. 
% The contrast between weighted and unweighted graphs, illustrated in Fig.~\ref{fig:flowsubgraphexample}, clarifies why Property~\ref{property1} leads to a self-consistent characterization of the flow subgraph only in weighted networks.

% For \textbf{unweighted} graphs, Property~\ref{property1} remains only a necessary condition. Local structural symmetries may lead to identical electrical potentials at distinct nodes, so that a node can be adjacent to multiple flow-subgraph nodes while all incident links carry zero current. Such situations are illustrated in Fig.~\ref{sfig:unweighted_flow_ex}, where equipotential neighbors prevent the propagation of current through the intermediate node. The contrast between weighted and unweighted graphs, illustrated in Fig.~\ref{fig:flowsubgraphexample}, clarifies why Property~\ref{property1} leads to a self-consistent characterization of the flow subgraph only in weighted networks.

\begin{figure}[!htbp]
    \centering
    % ---------- 子图 (a) ----------
    \begin{subfigure}[t]{0.21\textwidth}
        \includegraphics[width=\linewidth]{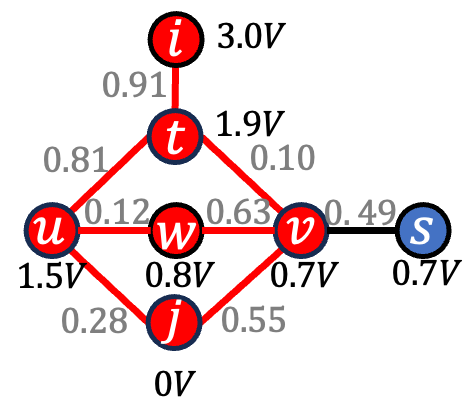}
        \caption{Flow subgraph of a graph with i.i.d. random link weights}
        \label{sfig:weighted_flow_ex}
    \end{subfigure}
    \hspace{10pt}
    % ---------- 子图 (b) ----------
    \begin{subfigure}[t]{0.21\textwidth}
        \includegraphics[width=\linewidth]{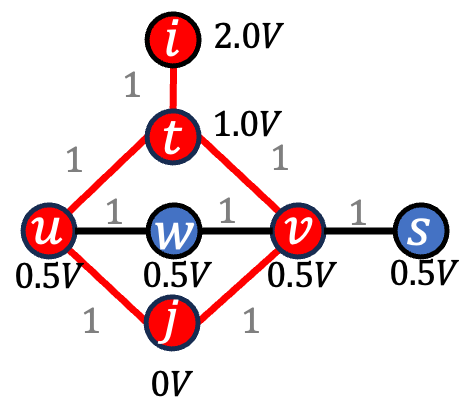}
        \caption{Flow subgraph of a graph with equal link weights}
        \label{sfig:unweighted_flow_ex}
    \end{subfigure}
    % ---------- 总标题 ----------
    \caption{Examples illustrating the role of link-weight degeneracy in the flow subgraph. The nodes and links of the flow subgraph are highlighted in red. In (a), link weights are i.i.d. continuous random variables, which yields distinct nodal potentials almost surely. Thus, Property~\ref{property1} is sufficient. In (b), all link weights are identical and structural symmetries create equipotential nodes $u$ and $v$. Consequently, node $w$ does not belong to the flow subgraph, although it satisfies Property~\ref{property1}.
    % In the weighted case, each link has an independent random resistance, which induces distinct nodal potentials in flow subgraph; consequently, every node satisfying Property~\ref{property1} carries nonzero current and belongs to the flow subgraph. 
    % In the unweighted case, structural symmetries may produce equipotential nodes $u$ and $v$, so that node $w$ even satisfies Property~\ref{property1} is no longer a flow-subgraph node.
    }
    \label{fig:flowsubgraphexample}
\end{figure}

In this section, we analyze the structure of the flow subgraph in Erd\H{o}s--R\'enyi random graphs and compute the expected size of the flow subgraph based on our proposed self-consistent approach.
The self-consistent approach captures the typical structure of flow subgraphs and can be applied to all random graphs with a known probability generating function (pgf) of node degree $D$.

% In this section, we analyze the structure of the flow subgraph in random graphs and propose a self-consistent approach to compute the expected size of the flow subgraph. The approach relies on Property~\ref{property1} as a self-consistent characterization.
% For weighted graphs with independent continuous resistances, this characterization always holds. In unweighted random graphs, configurations in which exact structural symmetries enforce equipotential nodes are non-generic and occur with small probability. Accordingly, the self-consistent approach captures the typical structure of flow subgraphs and can be applied to all random graphs with a known probability generating function (pgf) of node degree $D$.

\subsection{Node number of the flow subgraph in Erd\H{o}s--R\'enyi (ER) random graphs}
\label{sec: Node Number of the flow subgraph in random graphs}
We first analyze how the flow subgraph $G^*_{ij}$ is composed. 
For ER graph $G_p(N)$ with a small expected degree $E[D]<1$, all connected components are small~\cite{newman2010networks,erd6s1960evolution} with size of order $O(\log(N))$.
Hence, in this regime, the relative size of the flow subgraph $\rho_N = |\mathcal N(G^*_{ij})|/N$ is also negligible, where $|\mathcal N(G^*_{ij})|$ denotes the number of nodes in the flow subgraph $G^*_{ij}$.
As the expected degree increases, a giant component~(GC) emerges~\cite{newman2010networks}.
Outside the GC, all connected components have a vanishing relative size and thus the relative flow subgraph size $\rho_N$ remains negligible when either (or both) $i$ and $j$ lie outside the GC.
Therefore, the average size of the flow subgraph $G^*_{ij}$ is dominated by node pairs $(i,j)$ that both belong to the GC.

We then decompose the GC into a backbone subgraph $\mathcal{B}$ and branches. 

\begin{enumerate}
    \item{Backbone $\mathcal{B}$:} The maximal induced subgraph of the (giant) component in which every node has at least two neighbors within $\mathcal{B}$ (nodes highlighted with green outlines in Fig.~\ref{fig:branches_backbone}). 
    Denote by $b=|\mathcal{B}|/N$ the relative size of the backbone $\mathcal{B}$, where $|\mathcal{B}|$ denotes the cardinality of the set $\mathcal{B}$.
    \item Branches: the connected components of \(U:=GC\setminus \mathcal{B}\) (nodes without green outlines in Fig.~\ref{fig:branches_backbone}). 
    Each branch $T_k$ (with index $k = 1,\dots, M$) is a finite subgraph attached to $\mathcal{B}$ at a single node, where $M$ denotes the total number of branches. Let $\theta := |U|/N$ denote the total relative size of the union of all branches.
\end{enumerate}

\begin{figure}[!htbp]
  \centering
  \includegraphics[scale=0.3]{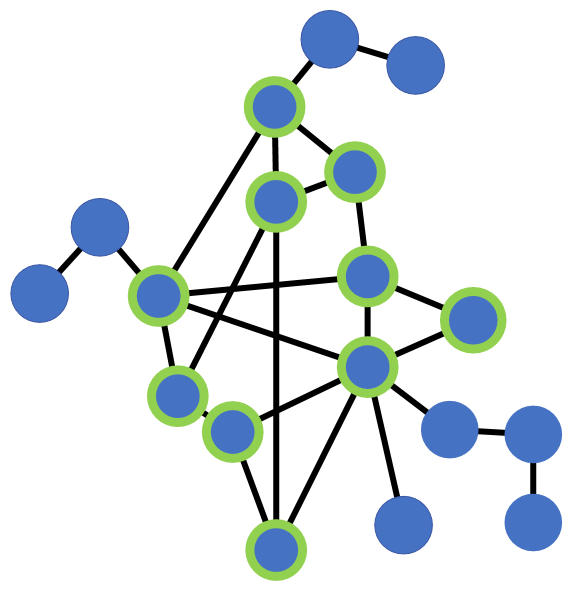}
  \caption{Backbone–branch decomposition: Backbone nodes (each node has $\ge 2$ neighbors within the backbone) are highlighted with green outlines; other nodes are branches (finite subgraph) attached to the backbone.}
\label{fig:branches_backbone}
\end{figure}

\subsubsection{Relative size of the backbone}
\label{sec:Relative size of the backbone}
Consider a flow subgraph $G^*_{ij}$ with source node $i$ and destination $j$.
A node either belongs to the backbone $\mathcal{B}$ or to its complement $\mathcal{B}^c$ with probability $b$ and $1-b$, respectively. 
By definition, any node $m$, different from node $i$ or $j$, can be in $\mathcal{B}$ only if it has at least two neighbors that also lie within the backbone $\mathcal{B}$ by Property~\ref{property1}.

Let $p_b$ denote the probability that a node $m$ reached via a uniformly chosen link has at least one remaining neighbor that belongs to the backbone $\mathcal{B}$.
For each remaining neighbor node $n$ of node $m$, the probability that node $n$ cannot reach node $m$ via link $l = m \sim n$ is $1-p_b$. In the large-$N$ limit, Erd\H{o}s–R\'enyi graphs are locally tree-like (see, e.g., \cite{van2024random}), so correlations between distinct remaining neighbors vanish. Under this approximation, we obtain the self-consistent equation
\begin{align}
    p_b &= 1 - \text{Pr}\left[\text{No remaining neighbor belongs to $\mathcal{B}$}\right] \notag\\
    % & = 1 - \sum_{k}\text{Pr}\left[\text{No remaining neighbor belongs to the backbone $\mathcal{B}$ $|D^\prime=k$}\right]\text{Pr}\left[D^\prime=k\right] \notag\\
    & = 1- \sum_{k}^{N-1}\Pr[D^\prime=k](1-p_b)^k \notag\\
    &= 1 - \varphi_{D^{\prime}}(1-p_b)
    \label{eq:self-consistent-general}
\end{align}
Eq.~\eqref{eq:self-consistent-general} indicates that the probability $p_b$ is identical to the probability that a node belongs to the GC of a random graph~\cite{PhysRevE.64.026118,newman2003structure,PhysRevLett.85.5468}.

Given a node $m$ with degree $D=k$, let $X$ denote the number of neighbors of $m$ that belong to the backbone $\mathcal{B}$. Conditioned on $D=k$, the variable $X$ follows a binomial distribution and
\[
\Pr[X = j \mid D = k] = \binom{k}{j} p_b^j (1-p_b)^{k-j}
\]

A node $w$ belongs to the backbone $\mathcal{B}$ only if at least two of its neighbors belong to $\mathcal{B}$. Hence,
\begin{align}
&\Pr[w \in \mathcal{B} \mid D = k] \notag\\[4pt]
=& \Pr[X \ge 2 \mid D = k] \notag\\[4pt]
=& 1 - \Pr[X = 0 \mid D = k]
     - \Pr[X = 1 \mid D = k]\notag
\end{align}
Using the binomial distribution of $X$,
\begin{equation}
\Pr[w \in \mathcal{B} \mid D = k]
= 1 - (1-p_b)^k - k p_b (1-p_b)^{k-1}
\end{equation}
Applying the law of total probability with respect to the degree, the fraction $b$ of nodes in the backbone $\mathcal{B}$ is
\[
b
= \sum_{k=0}^{N-1}
\Pr[D=k]\Pr[w\in\mathcal{B}\mid D=k]
\]
Substituting $\Pr[w\in\mathcal{B}\mid D=k]
= 1-(1-p_b)^k-kp_b(1-p_b)^{k-1}$ gives
\[
b
= \sum_{k=0}^{N-1}\Pr[D=k]
\Big[1-(1-p_b)^k-kp_b(1-p_b)^{k-1}\Big]
\]
Separating the terms and identifying the resulting sums
with $\varphi_D(1-p_b)$ and $\varphi_D'(1-p_b)$,
we obtain
\begin{equation}
b
= 1-\varphi_D(1-p_b)-p_b\,\varphi_D'(1-p_b)
\label{eq:backbone-size}
\end{equation}

For an ER graph $G_p(N)$ with mean degree
$E[D]=(N-1)p=\lambda$, the degree distribution is binomial~\cite{van2014performance} with
probability generating function
\[
   \varphi_D(z) = (1-p(1-z))^{N-1}
\]
In the sparse regime where $N\to\infty$ and $\lambda$ is fixed, this pgf
admits the expansion~\cite{van2014performance}
\begin{equation}
   \varphi_D(z)
   = e^{-\lambda(1-z)}
     \left(1 + O\!\left(\frac{1}{N}\right)\right)
   \label{eq:phi_ER}
\end{equation}
The degree and excess--degree probability generating functions of an ER graph given by $\varphi_D(z)=(1-p(1-z))^{N-1}$ and $\varphi_{D'}(z)=(1-p(1-z))^{N-2}$, respectively, thus satisfy, $\varphi_{D'}(z) = e^{-\lambda(1-z)}+O(1/N)$.
Substituting this expansion into the self-consistency
equation~\eqref{eq:self-consistent-general} yields
\begin{equation}
   p_b = 1 - e^{-E[D]\,p_b},
   \label{eq:self-consistent-ER}
\end{equation}
where the omitted terms introduce corrections of order $O(1/N)$ for ER graphs.

Eq.~\eqref{eq:self-consistent-ER} can be solved in closed form. Rewrite
\[
p_b = 1-e^{-E[D]p_b}
\quad\Longleftrightarrow\quad
1-p_b = e^{-E[D]p_b}
\]
Let $y=E[D](1-p_b)$. Then
\begin{equation}
    (-y)e^{-y} = -E[D]\,e^{-E[D]}
    \label{eq.lambert_1}
\end{equation}
The Lambert--$W$ function~\cite{corless1996lambert}
is defined implicitly by $W(x)e^{W(x)}=x$.
Solving~\eqref{eq.lambert_1} in terms of the Lambert--$W$ function yields
\begin{equation}
p_b
= 1+\frac{1}{E[D]}\,
W_0\!\left(-E[D]\,e^{-E[D]}\right),
\label{eq.p_b_closed_form}
\end{equation}
where $W_0$ denotes the principal branch.
For $E[D]\ge 0$, this branch gives the unique solution satisfying
$p_b\in[0,1]$, while the remaining real branch leads to non-physical solutions and is discarded.

Using the same sparse regime approximation as in~\ref{eq:phi_ER} in the backbone-size expression
\eqref{eq:backbone-size} by omitting $O\!\left(\frac{1}{N}\right)$ gives
\begin{equation}
   b \;=\; 1 - e^{-E[D]\,p_b}
              \bigl(1+E[D]\,p_b\bigr)
   \label{eq:backbone-size-ER}
\end{equation}

Eq.~\eqref{eq:self-consistent-ER} always has the trivial solution $p_b=0$; for the expected degree $E[D]\le 1$, the trivial solution is the only fixed point, which implies $b=0$ from \eqref{eq:backbone-size-ER}.
When the expected degree $E[D]>1$, a positive fixed point $p_b>0$ appears and increases with $E[D]$, and the backbone fraction $b$ given by \eqref{eq:backbone-size-ER} grows accordingly.

\subsubsection{Relative size of the branches}
\label{subsubsec:branch-size}
For the the union of all branches $U = GC \setminus \mathcal{B}$, the relative size $\theta = \frac{|U|}{N} = p_b - b$, while the relative size of each branch $T_k$ is defined as $\tau_k = \frac{|T_k|}{N}$, where $|T_k|$ is the size of a single branch.

For the ER graph or a configuration model with fixed degree distribution possessing a finite second moment, each branch can be regarded as a subcritical Galton–Watson tree~\cite{newman2010networks,PhysRevE.64.026118,PhysRevLett.85.5468} with offspring mean, denoted by $\xi<1$.
Let $p_T$ denote the extinction probability of the corresponding Galton–Watson process and $p_T$ satisfies $p_T = \varphi_{D^{\prime}}(p_T)$, where $\varphi_{D^{\prime}}(z)$ is the pgf of the excess-degree distribution of the graph.
The average offspring~\cite{newman2010networks,PhysRevE.64.026118,PhysRevLett.85.5468} is then given by $\xi = (1-p_T)\,\varphi_{D^{\prime}}'(p_T)$. 
In the specific case of an ER graph, the average offspring specializes to $\xi = E[D]p_T(1-p_T)<1$.

Since a single branch behaves as a subcritical Galton–Watson tree, the expected size of the branch $T_k$ is constant and independent of the size of the graph $N$: $E[|T_k|] = \frac{1}{1-\xi}$.
The branch size distribution has an exponentially decaying tail~\cite{newman2010networks,PhysRevE.64.026118,PhysRevLett.85.5468}, implying that $|T_k| = O(1)$ and therefore
\[\tau_k = \frac{|T_k|}{N} = O\!\left(\frac{1}{N}\right) \]

Since the total branch mass is $|U|=\theta N$ and a single branch has $E[|T_k|]=\frac{1}{1-\xi}$, the expected number of branches is $E[M] \approx \frac{|U|}{E[\,|T_k|\,]} \;=\; \theta N\, (1-\xi).$
% Thus, individual branches remain of constant size, while their relative size $\tau_k$ decreases to zero as $N$ increases.

\subsubsection{Relative size of the flow subgraph}
\label{sec:Case analysis}
Conditional on source-destination node pair $(i,j)\in{\rm GC}$, three different cases are considered in Fig.~\ref{fig:mu_sizecase}:
\begin{enumerate}
\item{Case (i):} Node $i$ and $j$ lie in different branches or one is in a branch and the other in the backbone $\mathcal{B}$. Case (i) occurs with probability $O(1)$, because the total relative size of all branches is $O(1)$ whereas each individual branch has $O(1)$ size. Hence, two random GC nodes fall in different branches (or one in a branch and one in $\mathcal{B}$) with a non-vanishing probability as $N\to\infty$. The normalized coverage within the backbone converges to $b$, while the branch parts of the flow subgraph $G_{ij}^*$ contain only $O(1)$ nodes and therefore contribute $O(1/N)$ after normalization. This follows from the fact that each branch behaves as a subcritical Galton--Watson tree whose expected size is bounded and independent of $N$ (see discussion above).

\item{Case (ii):} Both node $i$ and $j$ lie within the backbone $\mathcal{B}$. Case (ii) also has probability $O(1)$ under the same conditioning. In this case, the flow remains entirely inside the backbone, and the normalized size of the visited backbone portion converges to $b$.

\item{Case (iii):} Both node $i$ and $j$ belong to the same branch. The flow subgraph $G^*_{ij}$ is contained inside a single finite tree. Since a branch is a subcritical Galton--Watson tree of constant expected size (independent of $N$), we have the size of flow subgraph $|\mathcal N(G^*_{ij})| = O(1)$ and therefore the relative size of the flow subgraph $|\mathcal N(G^*_{ij})|/N = O(1/N)$. The probability of case (iii) is $O(1/N)$, so its contribution to the expected normalized size of the flow subgraph is $O(1/N)\cdot O(1/N)=O(1/N^2)$, which is negligible in expectation.

\end{enumerate}

\begin{figure}[!htbp]
    \centering
    % ---------- 子图 (a) ----------
    \begin{subfigure}[t]{0.16\textwidth}
        \includegraphics[width=\linewidth]{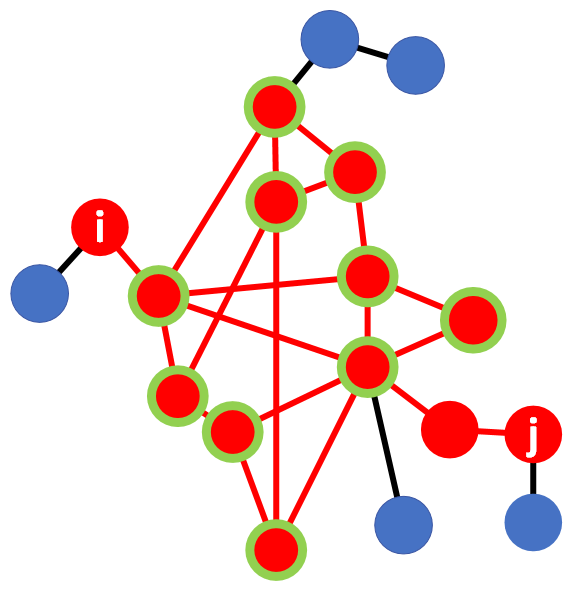}
        \caption{Case (i)}
        \label{sfig:mu_size_case1}
    \end{subfigure}
    \hspace{-5pt}
    % ---------- 子图 (b) ----------
    \begin{subfigure}[t]{0.16\textwidth}
        \includegraphics[width=\linewidth]{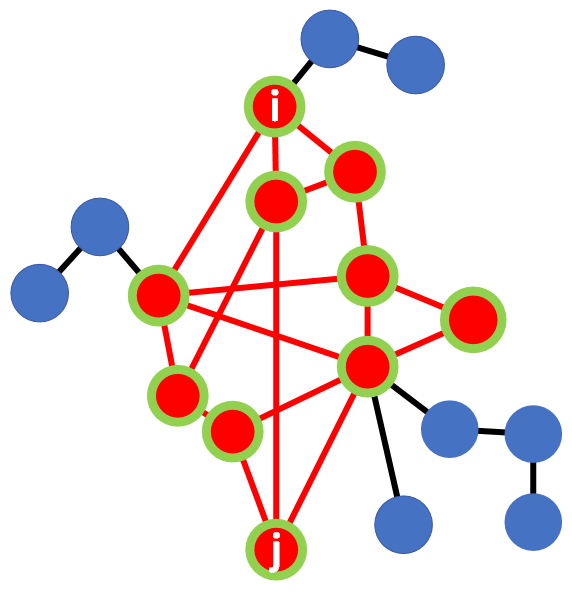}
        \caption{Case (ii)}
        \label{sfig:mu_size_case2}
    \end{subfigure}
    \hspace{-5pt}
    % ---------- 子图 (c) ----------
    \begin{subfigure}[t]{0.16\textwidth}
        \includegraphics[width=\linewidth]{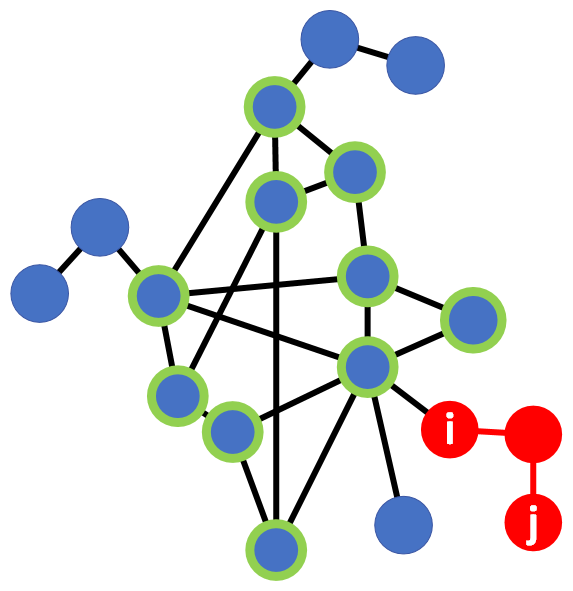}
        \caption{Case (iii)}
        \label{sfig:mu_size_case3}
    \end{subfigure}
    % ---------- 总标题 ----------
    \caption{Examples for different source--destination pairs. Red nodes and links indicate the components of the flow subgraph. Backbone nodes are highlighted with green node outlines.}
    \label{fig:mu_sizecase}
\end{figure}

In Section~\ref{sec:Relative size of the backbone}, we have shown that the probability that a randomly chosen node $i$ belongs to the giant component GC is $\Pr[i\in{\rm GC}]=p_b$. Therefore the probability that both nodes $i$ and $j$
belong to the GC is $\Pr[i,j\in{\rm GC}]=p_b^2$. Conditioning on this event,
Cases (i) and (ii) imply that the normalized size of the flow subgraph
restricted to the GC satisfies
\[
E\!\left[\frac{|\mathcal N(G^*_{ij})|}{N}\,\middle|\, i,j\in{\rm GC}\right]
= b + O\!\left(\frac{1}{N}\right),
\]
where the $O(1/N)$ term comes from the branches on the
GC. Nodes outside the GC occur with probability $1-p_b^2$ and generate a
flow subgraph of size $O(1)$, since every connected component outside the GC is a finite tree of bounded size. Their total contribution to $E[\rho_N]$ is therefore also $O(1/N)$. Consequently, the expected normalized size of the flow subgraph satisfies
\begin{equation}
    \label{eq:expectedsizeofflowsubgraph}
    E[\rho_N]
    = b\,p_b^2 + O\!\left(\frac{1}{N}\right)
\end{equation}

\begin{figure}[!htbp]
  \centering
  \includegraphics[width=0.7\linewidth]{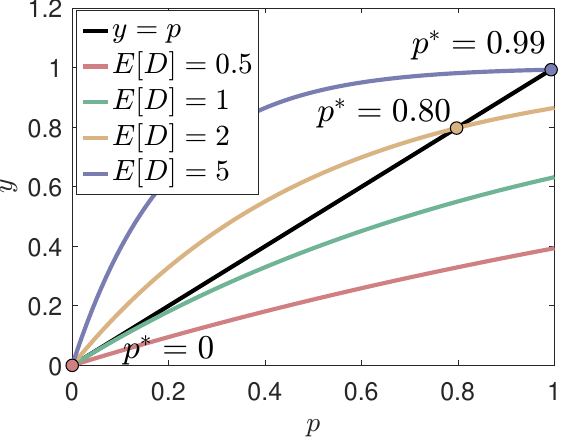}
  \caption{Intersections of $y=p$ (black) and $y=1-e^{-E[D] p}$ (colored) for different expected degrees $E[D]$. When the expected degree $E[D] \le 1$, only the trivial solution $p=0$ exists. For $E[D]>1$, a non-trivial fixed point appears and increases with $E[D]$.}
\label{Fig.ER_intersections}
\end{figure}

\begin{figure}[!htbp]
    \centering
    % ---------- 左子图 ----------
    \begin{subfigure}[t]{0.8\linewidth}
        \includegraphics[width=\linewidth]{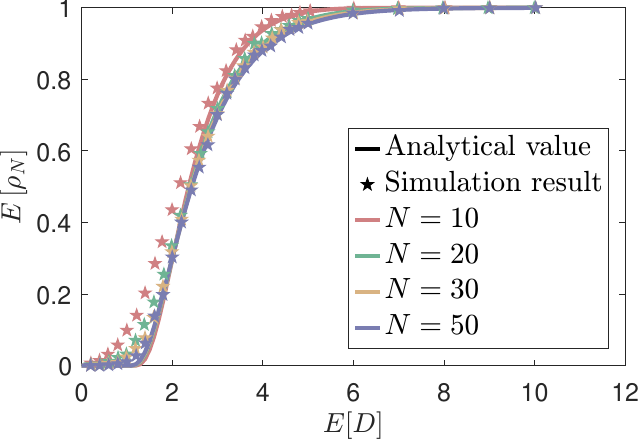}
        \caption{$N$ from 10 to 50}
        \label{sfig:mu_size_small}
    \end{subfigure}
    % ---------- 右子图 ----------
    \begin{subfigure}[t]{0.8\linewidth}
        \includegraphics[width=\linewidth]{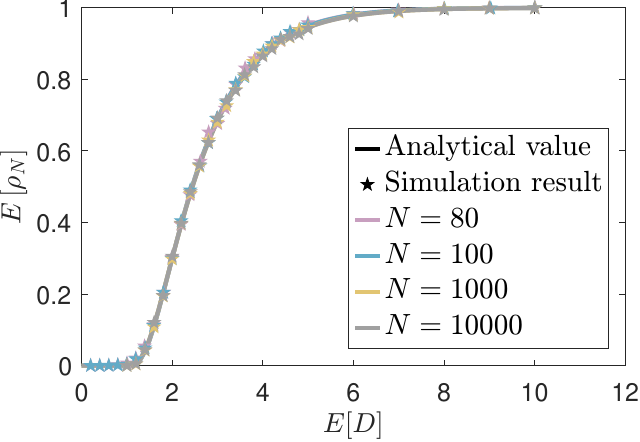}
        \caption{$N$ from 80 to 10000}
        \label{sfig:mu_size_large}
    \end{subfigure}

    % ---------- 总标题 ----------
    \caption{The average relative size $E\left[\rho_N\right]$ of the flow subgraph as a function of the expected degree $E[D]$ in ER graphs with identical link weights for different graph sizes $N$. 
    The analytical results in \eqref{eq:expectedsizeofflowsubgraph} are derived using the binomial degree distribution of the ER graph.}
    \label{fig:mu_size}
\end{figure}

Fig.~\ref{Fig.ER_intersections} shows that when the expected degree $E[D]$ is small (e.g., $E[D]=0.5$), the two curves $y=p$ and $y=1-e^{-E[D] p}$ intersect only at $p=0$, Eq.~\eqref{eq:self-consistent-ER} has only the solution 0, so the normalized size $b$ of the backbone $\mathcal{B}$ is zero. When the expected degree $E[D]$ exceeds the critical value $(E[D])_c = 1$, a second intersection emerges, indicating that a giant flow subgraph starts to form. As the expected degree $E[D]$ increases further (e.g., $E[D]=2, 5$), the fixed point $p$ and the corresponding size $b$ both grow and eventually approach 1.

Fig.~\ref{fig:mu_size} compares our analytical prediction in
\eqref{eq:expectedsizeofflowsubgraph} with simulations on ER graphs. For small network sizes $N$, a noticeable gap appears between the theoretical curve and the simulation results, which is consistent with the $O(1/N)$
finite-size deviation obtained in our analysis. As $N$ increases, this finite-size effect diminishes, and the analytical prediction in
\eqref{eq:expectedsizeofflowsubgraph} converges to the simulation results. When the graph is sufficiently large, the $O(1/N)$ correction becomes negligible, and the agreement between theory and simulation is nearly exact.

\subsection{Link number of the flow subgraph in Erd\H{o}s--R\'enyi (ER) random graphs}\label{sec:Link numbers of flow subgraphs}
For a source-destination node pair $(i,j)$, a link $l=m\sim n$ belongs to the flow subgraph $G^*_{ij}$ if and only if it carries a nonzero current, which by Ohm’s law is equivalent to $v_m\neq v_n$. Every link in the flow subgraph $G^*_{ij}$ must have both endpoints in the node set $\mathcal{N}(G^*_{ij})$ of the flow subgraph. Among the links belonging to $G^*_{ij}$, the membership condition is equivalent to
\[
l = m\sim n \in \mathcal{L}(G^*_{ij})
\quad\Longleftrightarrow\quad
y_{l} \neq 0
\quad\Longleftrightarrow\quad
v_m\neq v_n
\]
Identifying the links in the flow subgraph reduces to considering links whose endpoints both belong to the node set $\mathcal{N}(G^*_{ij})$ of the flow subgraph and have different electrical potentials $v$.

% Hence, any node $i \notin \mathcal{N}(G^*_{ij})$ carries no current, and Kirchhoff’s current law implies that all its incident link currents must be zero: $y_{ij}=0$ for every neighbor $v\sim u$. 
% Consequently, no link adjacent to a node outside the node set of the flow subgraph $N(G^*_{ij})$ can belong to the flow subgraph. In particular, every link
% in the flow subgraph $G^*_{ij}$ must have both endpoints in the node set of flow subgraph $N(G^*_{ij})$, and among those links the membership condition reduces to
% \[
% u\sim v \in L(G^*_{ij})
% \quad\Longleftrightarrow\quad
% I_{uv}\neq 0
% \quad\Longleftrightarrow\quad
% v_u\neq v_v.
% \]
% Therefore, when counting the links in the flow subgraph, we only need to
% consider links whose endpoints both lie in the node set of flow subgraph $N(G^*_{ij})$ and have
% different electrical potentials.

\begin{figure}[!htbp]
    \centering
    % ---------- Subfigure 1 ----------
    \begin{subfigure}[t]{0.4\linewidth}
        \captionsetup{width=\linewidth}  % 调整 caption 宽度（比图片略宽）
        \includegraphics[width=0.8\linewidth]{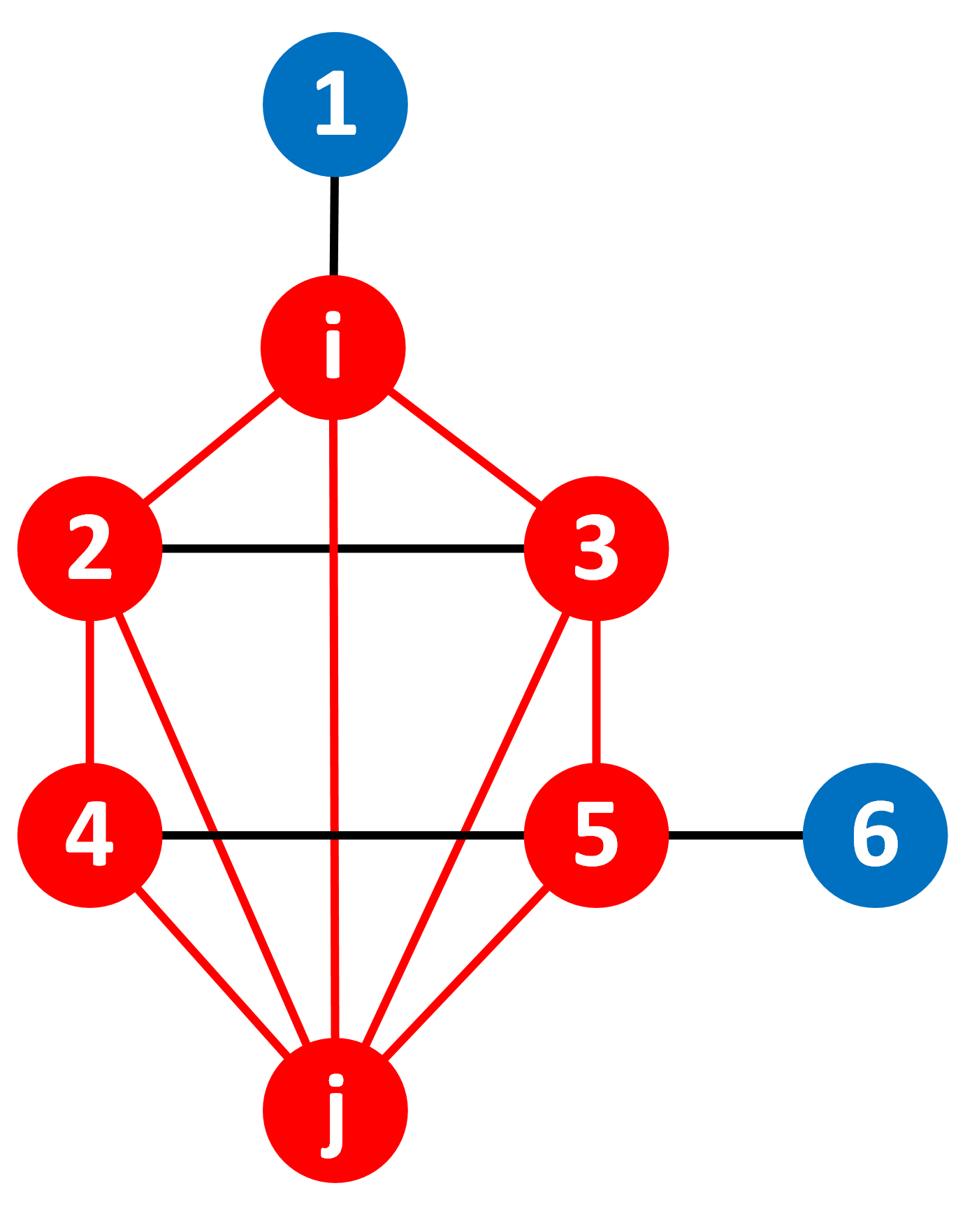}
        \caption{Flow subgraph $G_{ij}^*$ with $N=6$ nodes and $L=9$ links.}
        \label{sfig:flowsubgraphtoymodel1}
    \end{subfigure}
    % ---------- Subfigure 2 ----------
    \begin{subfigure}[t]{0.4\linewidth}
        \captionsetup{width=\linewidth}  % 调整 caption 宽度（标题更长）
        \includegraphics[width=0.6\linewidth]{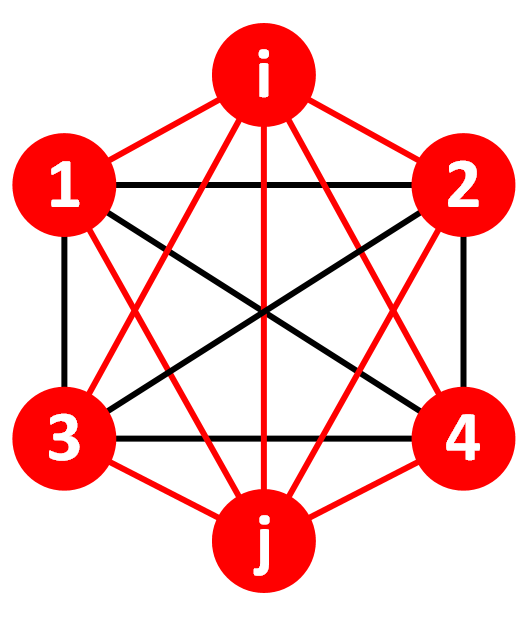}
        \caption{Flow subgraph $G_{ij}^*$ of a complete graph with $N=6$ nodes.}
     \label{sfig:flowsubgraphtoymodel2}
    \end{subfigure}
    \caption{Schematic illustration of links in flow subgraphs. The flow subgraph is highlighted in red. In (a), nodes 2, 3, 4, and 5 belong to the flow subgraph. Links $1 \sim i$ and $5 \sim 6$ are not included because nodes 1 and 6 do not belong to the flow subgraph. Links $2 \sim 3$ and $4 \sim 5$ are also excluded, since the node pairs $(2,3)$ and $(4,5)$ have equal potentials. In (b), all links between nodes 1, 2, 3, and 4 are not flow subgraph links.
    }
    \label{fig:unweighted}
\end{figure}

To investigate the link set $\mathcal{L}(G^*_{ij})$ of the flow subgraph, we first define a set of links $\mathcal{L}_{\mathrm{FS}}(i,j)$, referred to as \textit{candidate links},
% We first define the set of \textit{candidate links}, which connects nodes in the flow subgraph $G^*_{ij}$,
\[
\mathcal{L}_{\mathrm{FS}}(i,j)
    := \{\, l = m\sim n \in \mathcal{L}(G)\;|\; m,n \in \mathcal{N}(G^*_{ij}) \,\},
\]
The set $\mathcal{L}_{\mathrm{FS}}(i,j)$ contains all links whose endpoints belong to the flow-subgraph node set $\mathcal{N}(G^*_{ij})$.
However, in graphs with identical link weights, a link $l = m\sim n\in\mathcal{L}_{\mathrm{FS}}(i,j)$ may connect two equipotential nodes $m$ and $n$ and therefore carry zero current, so that link $l$ does not belong to the flow subgraph (see examples in Figs.~\ref{sfig:unweighted_flow_ex} and~\ref{sfig:flowsubgraphtoymodel1}).

% Since a node pair $(m,n)$ belonging to the flow subgraph may have the same potential, a link $l=m\sim n$ may belong to $\mathcal{L}_{\mathrm{FS}}(i,j)$, but not to  $\mathcal{L}(G^*_{ij})$, Fig.~\ref{sfig:flowsubgraphtoymodel1}. 
By construction, every flow-subgraph link must connect two flow-subgraph nodes, whereas the converse need not hold. Thus, the link set of the flow subgraph satisfies
\begin{equation}
\label{eq:samepotential}
    \mathcal{L}(G^*_{ij}) \subseteq \mathcal{L}_{\mathrm{FS}}(i,j)
\end{equation}
% Among the links in the candidate link set $\mathcal{L}(G^*_{ij})$ of flow subgraph, the only ones that do not carry
% current are those whose endpoints become exactly equipotential under the
% $i\!\to\! j$ injection. 
For weighted ER graphs with link weights drawn independently from a continuous distribution, nodal potentials are distinct (Lemma~\ref{lem:weighted_prob0_equipotential}), which implies $\mathcal{L}(G^*_{ij}) = \mathcal{L}_{\mathrm{FS}}(i,j)$.
In ER graphs with identical link weights, however, equipotential nodes may arise from local structural symmetries, namely, configurations in which two distinct nodes $m$ and $n$ have identical adjacency patterns with respect to the rest of the graph. In other words, there exists a graph automorphism~\cite{erdos1963asymmetric} that can map node $m$ onto $n$. In such symmetric configurations, the two nodes attain the same potential, so that a candidate link between them carries zero current and does not belong to $\mathcal{L}(G^*_{ij})$.

For ER graphs with identical link weights and moderate expected degree, such nontrivial automorphisms occur~\cite{erdos1963asymmetric} with probability~$O(1/N)$, giving
\[
|\mathcal{L}_{\mathrm{FS}}(i,j)|-|\mathcal{L}(G^*_{ij})|=O(1),\
\frac{|\mathcal{L}(G^*_{ij})|}{|\mathcal{L}_{\mathrm{FS}}(i,j)|}=1-O(1/N)
\]
However, when the expected degree $E[D]$ becomes very large (approaching a complete graph), many nodes have almost the same neighbors, making them structurally indistinguishable and thus equipotential. 
The number of such equipotential node pairs increases rapidly with $p$ and a fraction of links in $\mathcal{L}_{\mathrm{FS}}(i,j)$ carry zero current. 
Consequently, the difference $|\mathcal{L}_{\mathrm{FS}}(i,j)| - |\mathcal{L}(G^*_{ij})|$ is no longer $O(1)$ but grows proportionally to the number of symmetric (and equipotential) node pairs.

\begin{figure}[!htbp]
  \centering
  \includegraphics[scale=0.67]{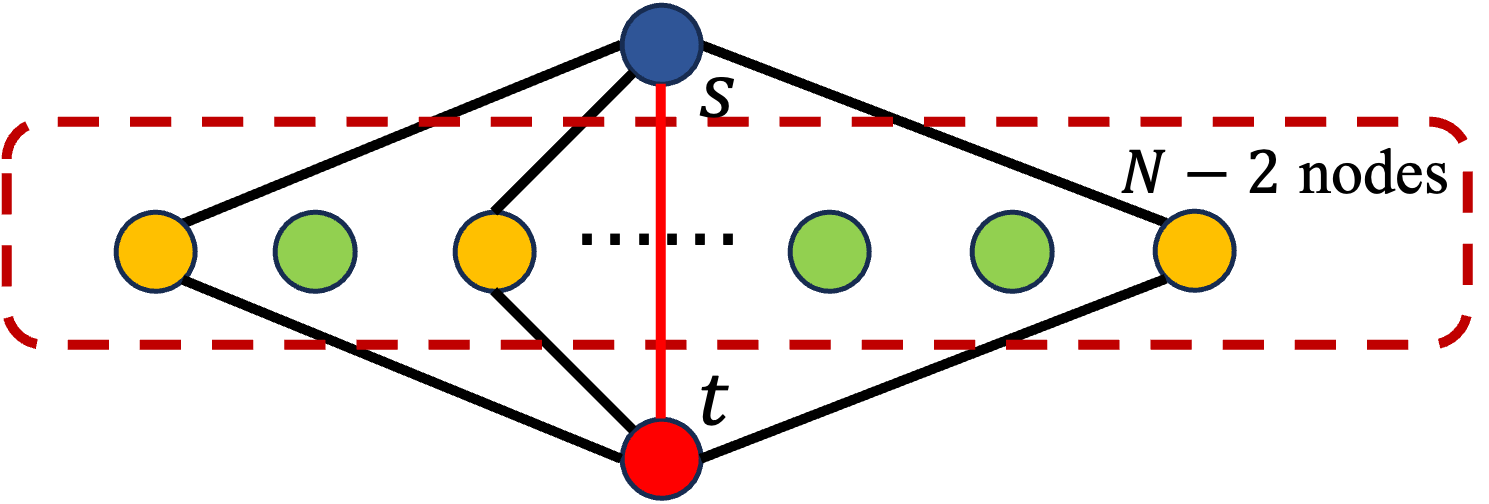}
  \caption{
Illustration of two nodes $s$ and $t$ with identical neighbor sets.
Out of the remaining $N-2$ nodes in the network, $k$ nodes (yellow) are connected to both $s$ and $t$, while the remaining $N-2-k$ nodes (green) are not connected to either $s$ or $t$. The links between nodes other than $s$ and $t$ are invisible for clarity.
}
\label{fig:same_neighbor_ex}
\end{figure}

We now derive an upper bound for the expected link number of the flow subgraph in ER graphs $G_p(N)$ with identical link weights. Consider an unordered node pair $(s,t)$ and condition on the existence of the link $l=s\sim t$. As illustrated in Fig.~\ref{fig:same_neighbor_ex}, nodes $s$ and $t$ share an identical adjacency pattern with respect to the remaining $N-2$ nodes. Each node other than nodes $s$ and $t$ is either connected to both $s$ and $t$ (a common neighbor of $s$ and $t$) or to neither of them. If nodes $s$ and $t$ are not current source or sink nodes, the symmetry then leads to equal potentials at $s$ and $t$ and the link $l=s\sim t$ carries zero current.

Suppose exactly $k$ out of the $N-2$ nodes are common neighbors of $s$ and $t$ in the configuration shown in Fig.~\ref{fig:same_neighbor_ex}, while the remaining $N-2-k$ nodes are connected to neither $s$ nor $t$.
Ignoring the connectivity among nodes other than $s$ and $t$, the conditional probability of this configuration (given the existence of $l = s \sim t$) is $\binom{N-2}{k} (p^2)^k \big((1-p)^2\big)^{N-2-k}$. 
Summing over $k=0,\dots,N-2$, the probability that a present link joins two nodes with identical neighbor sets is
\begin{equation}
\begin{aligned}
\Pr[\mathcal{E}_{st}]
&=  \sum_{k=0}^{N-2}
   \binom{N-2}{k}
   (p^2)^k
   \big((1-p)^2\big)^{N-2-k} \\
&= \big(p^2+(1-p)^2\big)^{N-2}
\end{aligned}
\end{equation}

Since there are $\binom{N}{2}$ unordered node pairs in an ER graph $G_p(N)$ with identical link weights, the expected fraction of such present links equals $\Pr[\mathcal{E}_{st}]$. 
The identical neighbor sets (Fig.~\ref{fig:same_neighbor_ex}) constitute only one type of configuration that leads to zero current on $l = s \sim t$. 
Other graph automorphisms may also render a link without a current.
Therefore, for large $N$, the expected fraction $E[\rho_L]$ of links in the flow subgraph, normalized by the total number of links in an ER graph with identical link weights, is upper bounded by
\begin{equation}
1 - \big(p^2+(1-p)^2\big)^{N-2}
\end{equation}

In a complete graph $K_N$ with identical link weights, all nodes other than source-destination nodes $i$ and $j$ have exactly the same neighbors and hence are equipotential. 
Hence, every link whose endpoints do not include $i$ or $j$ has zero potential difference and carries no current. 
The only links with nonzero potential differences are:
(i) the link $l=i \sim j$ and (ii) the $2(N-2)$ links on the two-hop paths $\mathcal{P}_{ij} = \{i\sim k,k \sim j \}$, $k$ is different from node $i$ and $j$. Thus, almost all links of $K_N$ lie outside the flow subgraph.

\begin{figure}[!htbp]
    \centering

    % ---- 左子图 ----
    \begin{subfigure}[t]{0.8\linewidth}
        \includegraphics[width=\linewidth]{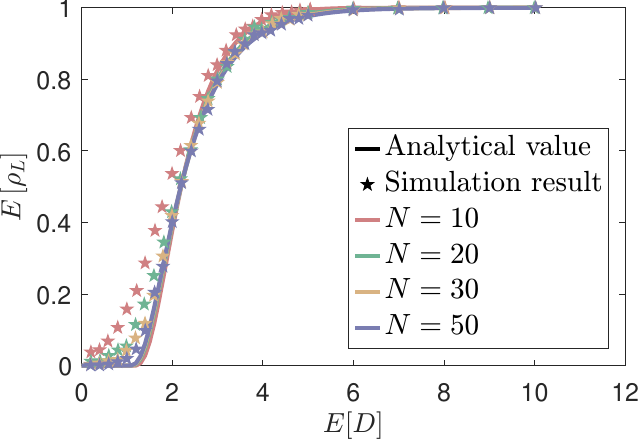}
        \caption{$N$ from 10 to 50}
        \label{sfig:mu_link_small}
    \end{subfigure}
    \hfill
    % ---- 右子图 ----
    \begin{subfigure}[t]{0.8\linewidth}
        \includegraphics[width=\linewidth]{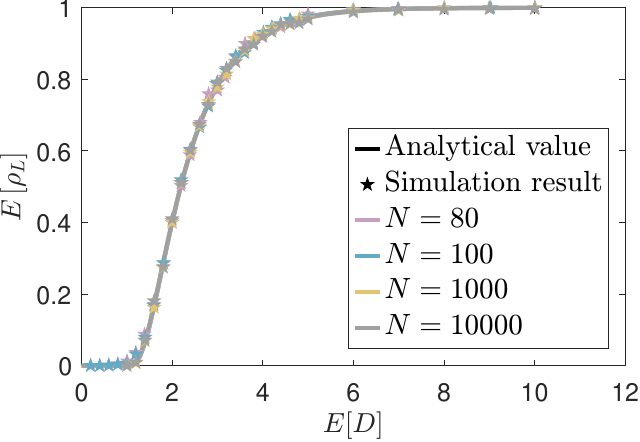}
        \caption{$N$ from 80 to 10000}
        \label{sfig:mu_link_large}
    \end{subfigure}

    % ---- 总标题 ----
    \caption{The average link fraction $E\left[\rho_L\right]$ of the flow subgraph as a function of the expected degree $E[D]$ in ER graphs with i.i.d. continuous link weights for different numbers of nodes $N$.}
    \label{fig:mu_link}
\end{figure}

Finally, we present the analytic formula of the expected fraction of links $E[\rho_L]$ in the flow subgraph. In Section~\ref{sec:Case analysis}, we have shown that as the size of the graph \(N\) increases, for a fraction $1-p_b^{2}$ of source-destination node pairs \((i,j)\), which lie outside GC, the relative size $\rho_N$ of the flow subgraph satisfies
\[
\rho_N = \frac{|\mathcal N(G^*_{ij})|}{N} \longrightarrow 0,
\]
which further indicates that the expected fraction $\frac{|\mathcal{L}_{\mathrm{FS}}(i,j)|}{L}$ of candidate links $\mathcal{L}_{\mathrm{FS}}(i,j)$, normalized by the total number of links $L$ of the ER graph tends to 0 with $N\to\infty$.
For the remaining fraction \(p_b^{2}\) of node pairs, the relative size of the flow subgraph approaches the size of the backbone \(\mathcal{B}\), i.e.,
\[
\rho_N \longrightarrow \frac{|\mathcal{B}|}{N}
\]
For a fixed source–destination pair $(i,j)$ that lies in the backbone $\mathcal{B}$, each endpoint of a uniformly chosen link belongs to the backbone with probability $p_b$. Hence, conditional on $(i,j)\in\mathcal B$, a link belongs to the candidate link set $\mathcal L_{\mathrm{FS}}(i,j)$ with probability $p_b^2$. Equivalently,
\[
E\!\left[\frac{|\mathcal L_{\mathrm{FS}}(i,j)|}{L}\;\middle|\; (i,j)\in\mathcal B\right]
= p_b^2 
\]

Averaging over all source–destination pairs $(i,j)$, we note that $\Pr[(i,j)\in\mathcal B]=p_b^2$, while for pairs outside the backbone, the flow subgraph is empty. Therefore, the overall expected fraction of links in the flow subgraph satisfies
\begin{equation}
\label{eq: expected link size of flowsubgraph}
\begin{aligned}
E[\rho_L]
&= p_b^2 \times p_b^2 + (1-p_b^2)\times 0 + O\!\left(\frac{1}{N}\right) \\
&= p_b^4 + O\!\left(\frac{1}{N}\right)
\end{aligned}
\end{equation}

\begin{figure}[!htbp]
  \centering
  \includegraphics[width=0.95\linewidth]{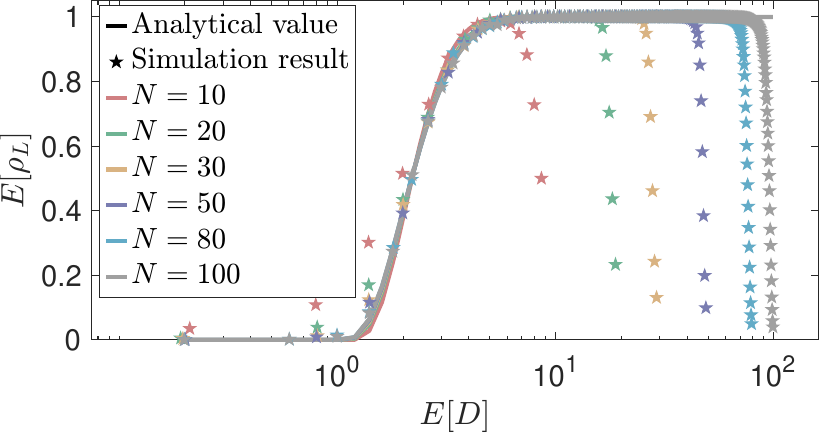}
  \caption{The average link fraction $E\left[\rho_L\right]$ of flow subgraph as a function of the expected degree $E[D]$ in ER graphs with identical link weights for different numbers of nodes $N$.}
\label{sfig:linksize_flow_subgraph_unweighted}
\end{figure}

Fig.~\ref{fig:mu_link} compares our analytical results in \eqref{eq: expected link size of flowsubgraph} with simulations on weighted ER graphs. 
The link weights are uniformly distributed from $(0,1)$.
A similar pattern to Fig.~\ref{fig:mu_size} is visible.
When the graph size $N$ is small, the branches lead to a gap between the analytical results and the simulation. 
The gap decreases and approaches 0 as $N$ increases.
For ER graphs with identical link weights, our \eqref{eq: expected link size of flowsubgraph} provides accurate predictions as long as the expected degree is not too large.
In the regime of large expected degrees, however, the number of links in the flow subgraph decreases due to \eqref{eq:samepotential}.
In the extreme case where the equal-link-weight graph is complete, the size of the flow subgraph reaches $|G_{ij}^*| = 1+2(N-2)$.

\section{Graph construction with end-to-end demands on power dissipation}
\label{sec:Network construction with end-to-end demands on power dissipation}
In an electrical resistor network $G$, the current flowing through each link leads to power dissipation because of the resistors, \eqref{eq:graphpowerdissipation} and \eqref{eq:graphpowerdissipation2}.
However, the inverse problem remains a non-trivial challenge: how to construct a graph that satisfies predetermined graph power dissipation demands $d_{ij}$, i.e., the required transportation cost between a node pair $(i,j)$ in a flow network.
For example, low-power Internet of Things (IoT) networks~\cite{1717439} often impose demands on the energy dissipation required for transmitting one unit of data between two nodes.
In this section, we focus on the flow network design problem, where we draw an analogy to electrical resistor networks.

Since the demand $d_{ij}$ can be normalized to the graph power dissipation $P_G$ with unit current injected in the graphs, we consider $I_c = 1$ in the following analysis. 
In a path network where the item travels along the shortest path and the link weight $w_l$ denotes the \quotes{resistance} or transportation cost on link $l$ during the transportation, the power dissipation $P_G$ then numerically equals the shortest path weight $s_{ij}$ between node $i$ and $j$.
Hence, the problem reduces to constructing a graph such that the corresponding shortest path weight matrix $S$ equals the given demands $D$, which is solved by the inverse all shortest path problem~\cite{qiu2023inverse}.

Similar to the path network, when unit currents are considered, the inverse power dissipation problem on a flow network reduces to constructing a network that satisfies prescribed end-to-end effective resistance demands:
\begin{problem}[Inverse Effective Resistance   Problem (IERP)]\label{Problem:IERP}
Given an $N\times N$ symmetric demand matrix $D$ with zero diagonal elements but positive off-diagonal elements. Determine an $N\times N$ weighted adjacency matrix $\widetilde{A}$ for the flow network, such that the norm $\|D-\Omega\|$ between the demand matrix and the effective resistance matrix is minimised.
\end{problem}

\subsection{Inverse effective resistance problem}
\label{sec:Inverse effective resistance problem}
Since the effective resistance matrix is a distance matrix\footnote{Any element $h_{ij}$ of a distance matrix $H$ is non-negative $h_{ij}\geq0$, but $h_{ii}=0$ and $h_{ij}$ obeys the triangle inequality: $h_{ij}\leq h_{ik}+h_{kj}$.}, we assume that the demand matrix $D$ is also a distance matrix in the sequel.
Indeed, Van Mieghem \cite{Mieghem2021atree} has proposed a method to transform an arbitrary demand matrix $D$ into a distance matrix:
If $d_{ik} +d_{kj} < d_{ij}$ for at least one node $k \in \mathcal{N}$ which violates the triangle inequality of a distance matrix, then we can replace $d_{ij} = \min_{1\leq k\leq N} (d_{ik} +d_{kj})$ and $d_{ji} = d_{ij}$.

When the given demand matrix $D$ can be realized by a graph, i.e., there exists at least one graph whose effective resistance matrix $\Omega = D$, the IERP can be solved elegantly by Fiedler's block matrix relation \cite{fiedler1998some,fiedler2011matrices,Mieghem2021atree},
\begin{equation}
	{\left( \begin{array}{cc}
			0 & u^T \\
			u & \Omega \\
		\end{array} 
		\right )}^{-1}={\left( \begin{array}{cc}
			-2\sigma^2 & p^T \\
			p & -\frac{1}{2}\widetilde{Q} \\
		\end{array} 
		\right )}
	\label{Fiedler matrix}
\end{equation}
with $\Omega p=2\sigma^2u$, where $\widetilde{Q}=\widetilde{\Delta}-\widetilde{A}$ is the weighted Laplacian matrix of a flow network and the diagonal matrix $\widetilde{\Delta} = {\rm diag}(\widetilde{A}u)$, $\widetilde{A}$ is the weighted adjacency matrix of a flow network, the variance $\sigma^2=\frac{\zeta^T\widetilde{Q}\zeta}{4}+R_G$, where $R_G$ is the effective graph resistance~\cite{PVM_GraphSpectra2023} and $u$ is $N\times1$ the all-one vector.
The vector $\zeta$ contains the diagonal elements of pseudoinverse $\widetilde{Q}^{\dag}$ of the Laplacian $\widetilde{Q}$. 
% Specifically, Van Mieghem\cite{Mieghem2021atree} defined the weight of a link $w_{ij}=r_{ij}$ as the resistance (in Ohm), then the weighted Laplacian $\widetilde{Q}$ has non-zero elements $\widetilde{q}_{ij}=-\frac{1}{r_{ij}}$ and $\widetilde{a}_F = \frac{1}{r_{ij}}$ for $i\neq j$, where $\widetilde{a}_{ij} = r_{ij}$, but $(\widetilde{a}_F)_{ij}=(\widetilde{a})_{ij}=0$ for $i\neq j$ if there is no link between node $i$ and node $j$. The diagonal elements $(\widetilde{a}_F)_{ii}=(\widetilde{a})_{ii}=0$ are always zero. 
Fiedler’s block matrix relation (\ref{Fiedler matrix}) indicates a one-to-one relation between the effective resistance matrix $\Omega$ and the weighted Laplacian $\widetilde{Q}$ and therefore, also between the effective resistance matrix $\Omega$ and the weighted adjacency matrices $\widetilde{A}$. 
By applying block inverse formulae \cite{PVM_GraphSpectra2023} to Fiedler's block matrix relation, the Laplacian is
\begin{equation}
	\widetilde{Q}=\frac{2}{2\sigma^2}pp^T-2\Omega^{-1}
	\label{equation:getQtilde2}
\end{equation}
where $2\sigma^2=\frac{1}{u^T\Omega^{-1}u}$ and the vector $p=\frac{1}{u^T\Omega^{-1}u}\Omega^{-1}u$.
With $\widetilde{Q}=\widetilde{\Delta}-\widetilde{A}$, the weighted adjacency matrix follows as
\begin{equation}
\widetilde{A}=\widetilde{\Delta}+2\Omega^{-1}-\frac{1}{\sigma^2}pp^T
	\label{equation:getAtilde}
\end{equation}
We refer to the method that substitutes the effective resistance matrix $\Omega$ in \eqref{equation:getAtilde} with a given demand $D$ as the \quotes{Fiedler approach}.
Given a demand matrix $D$, if there exists at least one graph whose effective resistance matrix $\Omega = D$, then the Fiedler approach can provide an exact solution to the IERP instance.
% \ref{equation:getAtilde} then provides an exact solution to the IERP for any given graph-realizable demand matrices $D$.
% We refer to the method which executes \eqref{equation:getAtilde} by substituting the effective resistance matrix $\Omega$ with an input demand matrix $D$, as a reference.

\subsection{Graph realizability of perturbed effective resistances}
We call a demand matrix $D$ graph-realizable if there exists at least one graph $G$ whose effective resistance matrix $\Omega = D$.
Constructing a demand matrix that is guaranteed to be graph realizable remains, to the best of our knowledge, a challenging problem.
The main difficulty lies in the dependence among the effective resistances of different node pairs.
Qiu et al.\cite{qiu2023inverse} illustrated that the graph realizability of effective resistances is highly sensitive to perturbations.
As shown in \cite{qiu2023inverse}, even an insignificant perturbation to a realizable effective resistance matrix $\Omega$ can result in a perturbed matrix $\Omega^\prime$ that is no longer realizable.
Specifically, the weighted adjacency matrix $\widetilde{A}^\prime$, derived from a perturbed effective resistance $\Omega^\prime$ via \eqref{equation:getAtilde}, can have negative off-diagonal elements due to the complicated dependencies between effective resistances.

A similar situation, where small perturbations in the effective resistance matrix can render the matrix not graph-realizable, appears in the network version of Calderón’s inverse problem~\cite{Alessandrini01011988,carmona2024stable,GERNANDT202229}, which is known to be a severely ill-posed problem.
Consider an electrical network $G$ whose link weight represent a conductance and nodes are classified as boundary nodes and interior nodes.
Given the adjacency matrix $A$ of network $G$, the prescribed electric potential on the boundary nodes and the currents injected into (or extracted from) the boundary nodes, Calderón’s inverse problem asks for obtaining the conductance of every link such that all potential and current constraints are satisfied.
Carmona et al.~\cite{carmona2024stable} demonstrated that a small perturbation on the given potentials or currents may result in extremely large variations in the recovered conductances and in some cases may even produce negative conductances.
Moreover, the big discrepancies appear on links that are far away from the boundary nodes in terms of hopcount.
The severe ill-posedness arises because the boundary measurements on currents and voltages are only weakly sensitive to variations of conductances located deep inside the network, so that small perturbations at the boundary input may correspond to large and non-unique changes in the interior conductances.

% Without knowledge of the underlying graph, directly constructing an effective resistance matrix $\Omega$ (as a demand matrix $D$) whose corresponding weighted adjacency matrix is guaranteed to be non-negative remains an open problem.
% The investigation of small perturbations on effective resistance highlights the difficulty of constructing a graph-realizable demand matrix: even an insignificant perturbation of a valid effective resistance matrix may destroy its graph realizability; therefore, generating demands by perturbing an existing effective resistance matrix or by directly specifying a resistance matrix is unreliable.
The investigation of small perturbations of effective resistance indicates that even insignificant changes can destroy the graph realizability of an effective resistance, making demand construction based on perturbing or directly specifying effective resistance matrices unreliable.
Hence, the Fiedler approach, which directly executes \eqref{equation:getAtilde} by substituting the effective resistance matrix $\Omega$ with the given demand matrix $D$, loses feasibility in practical scenarios where measure accuracy is limited.
To solve the IERP for arbitrary demands $D$, we propose an algorithm called \quotes{Resistor Gap Pruning (RGP)} (Algorithm~\ref{alg:RGP}), by leveraging resistor and parallel circuit rules, in Section~\ref{sec:RGP}.

\subsection{Resistor Gap Pruning Algorithm}\label{sec:RGP}
The Resistor Gap Pruning (RGP) algorithm aims to obtain a graph whose effective resistance matrix $\Omega$ is approximately equal to an arbitrary predetermined demand matrix $D$.
Consider a flow network $G$ with each link $l = i \sim j$ possessing a resistor $r_{ij} = \frac{1}{w_{ij}}$ equaling the reciprocal of the link weight. 
In a path network, Qiu et al. \cite{qiu2023inverse} demonstrated that a link with small link weight $w_{ij}$, which corresponds to a high resistor $r_{ij}$ in a flow network, is unlikely to be included in a shortest path.
Utilizing this principle, Qiu et al. \cite{qiu2023inverse} proposed the Omega Link Removal algorithm to construct a graph such that the corresponding shortest path weight matrix $S$ approximately equals a given demand matrix $D$.
Inspired by \cite{qiu2023inverse}, we now propose our RGP algorithm for designing a flow network. 

\begin{figure}[tbp]
\centering
\begin{minipage}{\linewidth}
\caption{Resistor Gap Pruning (RGP)}
\label{alg:RGP}
\begin{algorithmic}[1]
    \Require $N\times N$ demand matrix $D$
    \Ensure $N\times N$ weighted adjacency matrix $\widetilde{A}$ for flow network
    \State $A \gets J - I$
    \State $W \gets$ element-wise reciprocal of $D$ over nonzero elements
    \State $\widetilde{A} \gets A \circ W$
    \State $\Omega \gets$ effective resistance matrix of $\widetilde{A}$
    \State $\epsilon \gets \lVert D-\Omega\rVert$
    \Repeat
        \State $\epsilon^\prime \gets \epsilon$
        \State $\hat{\Omega} \gets$ element-wise reciprocal of $\Omega$ over nonzero elements
        \State $\Gamma \gets (\hat{\Omega}-\widetilde{A}) \circ (D-\Omega)\circ A$
        \State $(i,j) \gets$ indices of the maximum element in $\Gamma$
        \State $A \gets A - e_i e_j^{T} - e_j e_i^{T}$
        \State $\widetilde{A} \gets A \circ W$
        \State $\Omega \gets$ effective resistance matrix of $\widetilde{A}$
        \State $\epsilon \gets \lVert D-\Omega\rVert$
    \Until{$\epsilon<\epsilon^\prime$}
    \State $A \gets A + e_i e_j^{T} + e_j e_i^{T}$
    \State $\gamma_{ij} \gets d_{ij}/\omega_{ij}$ for each non-diagonal $(i,j)$ in $\Omega$
    \State $\alpha \gets$ mean of all $\gamma_{ij}$
    \State $\widetilde{A} \gets \frac{1}{\alpha} A \circ W$
    \State \Return $\widetilde{A}$
\end{algorithmic}
\end{minipage}
\end{figure}

The main mechanism of RGP is based on the parallel circuit rules and \eqref{Eq:Omega_ij_parallel}.
The parallel circuit rules show that the effective resistance in the circuit is always less than or equal to the smallest resistance (see \cite{qiu2023inverse} and Section~\ref{sec:Omega}).
In other words, for two adjacent nodes $i$ and $j$, the effective resistance always satisfies $\omega_{ij} \leq r_{ij}$, with equality holding only when the direct link $i \sim j$ is the unique path connecting nodes $i$ and $j$.
Hence, for an arbitrary node pair $(i,j)$, adding links in $G$ can only decrease the effective resistance $\omega_{ij}$ or leave $\omega_{ij}$ unchanged, while removing links in $G$ \emph{cannot} decrease the effective resistance $\omega_{ij}$.
Furthermore, \eqref{Eq:Omega_ij_parallel} indicates that removing a link with a high resistor\footnote{A high resistor means small current through the link by the law of Ohm; hence, a small proportion of traffic. In the extreme case where the resistor $r_{ij}\rightarrow \infty$ implying that almost no current flows through the link $l = i \sim j$.} $r_{ij}$ has an insignificant effect on the effective resistance $\omega_{ij}$.
The link $l = i \sim j$ with a high resistor $r_{ij}$ is then regarded as a \quotes{redundant effective resistance link}.
The main idea of RGP is to remove redundant effective resistance links from a complete graph so that the effective resistance matrix $\Omega$ approaches the target demand $D$:
\begin{enumerate}
    \item Construct a complete graph whose weighted adjacency matrix $\widetilde{A} = D$.
    \item Iteratively remove links whose removal has an insignificant impact on the graph effective resistance.
\end{enumerate}
The details are shown by Algorithm~\ref{alg:RGP}.

The RGP algorithm is initialised in line 1 by the complete graph with the adjacency matrix $A = J - I$, while the link weight matrix $W$ is element-wise reciprocal of $D$ over nonzero elements, line 2.
Lines 3 and 4 then obtain the weighted adjacency matrix $\widetilde{A}$ of the flow network and the corresponding effective resistance matrix $\Omega$.
The difference between the demand matrix $D$ and the effective resistance matrix $\Omega$ is computed in line $5$ as $\epsilon$.
Specifically, the difference is measured by $L1$ norm $\| D-\alpha\Omega\| = \sum_{ij} |d_{ij}-\alpha\omega_{ij}|$.
Assume there exists a graph $G$ whose effective resistance matrix $\Omega_G$ exactly equals the demand matrix $D$.
The effective resistance matrix $\Omega<\Omega_G$ because the resistance of each link $r_{ij} = d_{ij}$ and there are multiple paths between node $i$ and $j$.
% The remaining RGP aims to iteratively remove redundant effective resistance links and to increase the effective resistance matrix $\Omega$ to better approximate the demand matrix $D$.
We expect that the difference $\epsilon$ will shrink as the links are iteratively removed until the obtained graph is close to the graph $G$ whose effective resistance matrix $\Omega_G$ exactly equals the demand matrix $D$.  
To determine which link should be removed, in line 9, we compute the $N\times N$ heuristic score matrix
\[
\Gamma \gets \left(\hat{\Omega} - \widetilde{A}\right) \circ \left(D - \Omega \right) \circ A,
\]
where the $N\times N$ matrix $\hat{\Omega}$ contains the element-wise reciprocal of the effective resistance on the off-diagonal elements, with the diagonal elements remaining zero.
The heuristic score consists of three components:
\begin{enumerate}
    \item The first factor $\left(\hat{\Omega} - \widetilde{A}\right)$ of the heuristic score matrix $\Gamma$ denotes the reciprocal of effective resistance between a pair of adjacent nodes (i.e. $a_{ij}=1$), in case the direct link between them is removed (as in (\ref{Eq:Omega_ij_parallel})).
    Specifically, from (\ref{Eq:Omega_ij_parallel}), we have 
    \[\frac{1}{\left(\omega_{G\setminus l}\right)_{ij}} = \frac{1}{\omega_{ij}} - \frac{1}{r_{ij}} \]
    A smaller $\left(\omega_{G\setminus l}\right)_{ij}$, equivalently a larger $\frac{1}{\left(\omega_{G\setminus l}\right)_{ij}}$, may imply a smaller contribution of $\frac{1}{r_{ij}}$ and thus a larger $r_{ij}$, indicating that link $l = i \sim j$ can be regarded as redundant.
    Moreover, a smaller $\left(\omega_{G\setminus l}\right)_{ij}$ indicates that, after removing link $l = i \sim j$, nodes $i$ and $j$ may remain connected through multiple paths acting in parallel.  
    In other words, the adjacent nodes $i$ and $j$ are easily reachable via the rest of the graph when the link is removed.
    \item The second factor $\left(D-\Omega\right)$ quantifies the gap between the predetermined demand and the obtained effective resistance in the current iteration.
    \item The third factor is the adjacency matrix, which encodes the graph topology at the current iteration and ensures that only existing links are considered for removal.
\end{enumerate}
By jointly accounting for the link redundancy, the reachability between two nodes after link removal and the difference between the current effective resistance and the target demand, we remove the existing link with the highest value in $\Gamma$ (lines 10 and 11).
The adjacency matrix $A$, the weighted adjacency matrix $\widetilde{A}$ of the flow network, the corresponding effective resistance matrix $\Omega$ and the difference $\epsilon$ between the demand and the effective resistance matrix are then updated respectively in lines 12-14.
We perform the link removal until the updated $\epsilon$ is larger than the difference $\epsilon^\prime$ in the previous iteration.
At that point, the last removed link is returned (line 16).

In lines 17-19, we further scale the link weight matrix $W$ according to the norm $\| D-\Omega\|$.
Since the effective resistance $\Omega$ is sensitive even to insignificant perturbations~\cite {qiu2023inverse}, we cannot independently scale each element of the effective resistance $\Omega$ with a separate parameter.
Fortunately, scaling the entire effective resistance matrix $\Omega$, by a positive scaling parameter $\alpha$ yielding $\alpha\Omega$, guarantees that the corresponding weighted adjacency matrix remains nonnegative: From \eqref{Eq:omega_ij}, we have
\begin{align}\label{Eq:alphaomega_ij}
    \alpha\omega_{ij} &= \alpha\left(e_i - e_j\right)^{T}  Q^{\dagger} \left(e_i - e_j\right)\\
    &= \left(e_i - e_j\right)^{T}  \left(\frac{1}{\alpha}Q\right)^{\dagger} \left(e_i - e_j\right),
\end{align}
which indicates that $\alpha\Omega$ is the effective resistance of a graph with Laplacian $\frac{1}{\alpha}\widetilde{Q}$ and weighted adjacency matrix $\frac{1}{\alpha}\widetilde{A}$.
The remaining task is then to obtain the scaling parameter $\alpha$ that minimizes the norm $\| D-\Omega\|$, e.g., $L1$ norm, $\| D-\alpha\Omega\| = \sum_{ij} |d_{ij}-\alpha\omega_{ij}|$.
Because both the demand $d_{ij}$ and effective resistance $\omega_{ij}$ are positive, each term $|d_{ij}-\alpha\omega_{ij}|$ can be expressed piecewise as
\[
|d_{ij} - \alpha \omega_{ij}| =
\begin{cases}
d_{ij} - \alpha \omega_{ij}, & \text{if } \alpha \le \dfrac{d_{ij}}{\omega_{ij}}, \\[6pt]
\alpha \omega_{ij} - d_{ij}, & \text{if } \alpha > \dfrac{d_{ij}}{\omega_{ij}},
\end{cases}
\]
which is a convex piecewise-linear function, minimized at $\alpha = \dfrac{d_{ij}}{\omega_{ij}}$.
Hence, the norm $\| D-\alpha\Omega\| = \sum_{ij} |d_{ij}-\omega_{ij}|$ is also a convex piecewise linear function and the minimum occurs at one of the breakpoints, corresponding to a $\dfrac{d_{ij}}{\omega_{ij}}$ for a (or some) node pair $(i,j)$.
In line 18, instead of searching over all possible $\dfrac{d_{ij}}{\omega_{ij}}$ to obtain the minimum of norm $\| D-\alpha\Omega\|$, we use $\alpha$ equaling the mean of $\dfrac{d_{ij}}{\omega_{ij}}$ for all possible $(i,j)$, which can reduce the computational complexity.
In line 19, the weighted adjacency matrix $\widetilde{A}$ is updated by scaling the link weight matrix $W$ with the factor $\frac{1}{\alpha}$.
Finally, in line 20, the RGP returns the resulting weighted adjacency matrix $\widetilde{A}$.

The main computational complexity of the RGP algorithm comes from updating the effective resistance (line 13), which requires computational complexity $O(N^3)$ in each iteration.
Since the RGP algorithm initialises the topology with a complete graph and iteratively removes redundant links until, in the worst case, the graph becomes a tree graph with $N-1$ links, there are up to $\frac{N \left(N - 1 \right)}{2} - (N - 1)$ iterations.
Therefore, the overall computational complexity for the worst case of the RGP algorithm is $O(N^5)$.
To streamline the computational complexity, one possible strategy is to employ the Sherman-Morrison equation \cite{van1996matrix}, which provides an efficient way to compute the inverse of a matrix after a rank-one modification.
In each iteration of RGP, we remove one link from the graph, which corresponds to a rank-one update to the Laplacian $Q$.
Hence, after the first iteration, the Sherman-Morrison equation can eliminate the need to compute the pseudoinverse of the Laplacian $Q^\dagger$ and effective resistance $\Omega$ from scratch, which reduces the computational complexity of updating $\Omega$ from $O(N^3)$ to $O(N^2)$.
Accordingly, the overall computational complexity of the RGP can be reduced to $O(N^4)$.
% Another possible approach is to remove more than $1$ links in each iteration, especially in the early stage of removing links from a complete graph.
% Specifically, instead of obtaining the maximum element in $\Gamma$, we can remove redundant links with the maximum $m$ elements in $\Gamma$ in one iteration.
% The overall computational complexity can then shrink to $O(N^4)$ if we remove $m = N$ elements in one iteration.

In Appendix~\ref{app:Performance of RGP algorithm}, we comprehensively evaluated the performance of the RGP algorithm.
We introduce a benchmark approach, \quotes{Fiedler approach}, which executes \eqref{equation:getAtilde} by substituting the effective resistance matrix $\Omega$ with an input demand matrix $D$, as a reference.
The graphs produced by both the RGP algorithm and the Fiedler approach are then compared under the same predetermined demand matrix. 

As shown in Figs.~\ref{fig:performance_tree},~\ref{fig:performanceRGP},~\ref{fig:performance_ERexp} and Table~\ref{tab:RGP empirical network} in Appendix~\ref{app:Performance of RGP algorithm}, the RGP algorithm generates graphs whose effective resistance matrices closely approximate the predetermined demand matrix and demonstrates consistent performance across different demands.
Nearly all links in the graph produced by RGP also appear in the graph generated by the benchmark method. 
Compared with the benchmark approach, RGP generally produces sparser graphs.
Specifically, when the demand matrix equals an effective resistance derived from a tree graph, the RGP algorithm can generally generate a tree identical to that produced by the benchmark, Fig.~\ref{fig:performance_tree}.

\section{Conclusion}\label{sec:Conclusion}
Motivated by practical scenarios in 6G communication and information spreading on social platforms, where transport cannot be accurately described by a single shortest path, we study flow networks in which transport distributes proportionally across all available paths. 
We first analyze the size of the flow subgraph in flow networks. 
We decompose the giant component (GC) into a backbone subgraph $\mathcal{B}$ and branches.
Using the requirement that every non-terminal node in the flow subgraph must have at least two neighbors within the flow subgraph, we establish a self-consistent formulation. 
From this formulation, we obtain the probability \(p_b\) that a node has at least one neighbor that belongs to the backbone and the backbone fraction \(b\). Both probability \(p_b\) and backbone fraction \(b\) remain zero when the expected degree $E[D]<1$ and become positive once the expected degree exceeds one. Based on these quantities, we characterize the expected size of the flow subgraph. Conditioned on both nodes lying in the giant component, the normalized size of the flow subgraph converges to \(b\,p_b^2\) with an \(O(1/N)\) finite-size deviation. A similar reasoning applies to the number of links in the flow subgraph, which converges to \(p_b^4\) up to the same correction order. Simulations agree with these results and reproduce both the large-scale behavior and the finite-size deviations.

% We further study the power dissipation during the transport at both the link and network levels.
% Specifically, for unweighted ER graphs or ER graphs with uniformly distributed links, the power dissipated on a link exhibits a power law decay with respect to the size of the graph $N$ and link density $p$.

Finally, we consider the challenge of constructing a graph with predetermined end-to-end power dissipation, which can be reformulated as the \emph{inverse effective resistance problem} that asks for a graph such that the corresponding effective resistance matrix equals a given demand matrix.
Inspired by circuit rules and effective resistance computation, we propose a heuristic algorithm, \quotes{Resistor Gap Pruning} (RGP), which provides sparse graphs closely approximating the demand effective resistance and shows stable performance across different demand scenarios.
Further, our simulations reveal that nearly all links belonging to the sparse graphs $H$ obtained by RGP also exist in the benchmark ER graph $G$ by Fiedler approach, indicating that the links in graph $H$ exhibit significant influence on the effective resistance.
Indeed, for an ER graph with identical link weights, previous studies ~\cite{von2014hitting, sylvester2021random, akara-pipattana_resistance_2022, carmi2007transport,carmi2008transport} have shown that the effective resistance can be approximated as $\omega_{st} \propto \frac{1}{d_s} + \frac{1}{d_t}$, stating that the transport is dominated by the links adjacent to node $s$ and $t$, while the remaining graph is practically a perfect conductor.
Our findings then provide a complementary perspective on those \quotes{critical} links/nodes that determine effective resistance in a flow network: the effective resistance can be dominated by a small number of links that interconnect the network.
Moreover, the proposed RGP algorithm can also be utilized for a type of \quotes{network sparsification}, which provides a sparser network while preserving a similar effective resistance matrix. 

\section*{Author Contributions}

Z. Qiu performed all simulations, initiated and addressed the research problem studied in Section~IV, and led the writing of Sections~I (Introduction), II (Terminology), IV, and V (Conclusion).
X. Liu developed the analytical framework for computing the expected size of the flow subgraph, wrote Section~III, and proved and wrote Appendix~B (Lemmas for flow subgraph in graphs with i.i.d.\ continuous link weights).
R. Noldus introduced the concepts of the flow subgraph and end-to-end power dissipation, and reviewed the manuscript.
P. Van Mieghem posed the problem of the size of the flow subgraph, acted as the principal advisor for the project, and reviewed the manuscript.

\begin{acknowledgments}
P. Van Mieghem is supported by the European Research Council under the European Union’s Horizon 2020 research and innovation program (Grant Agreement 101019718).
\end{acknowledgments}

\newpage
\appendix
\section{Symbol}
\label{app:Notation list}
\setlength{\arrayrulewidth}{0.6pt}
\begin{table}[H]
	\centering
	\caption{Symbol}
	\label{symbol}
	\begin{tabular}{ll}
		\hline
		Symbol& Definition\\
		\hline
		$G$& A network or graph\\
		$\mathcal{N}$& Set of nodes\\
		$N$& Number of nodes\\
		$\mathcal{L}$& Set of links\\
		$L$& Number of links\\
        $l = i\sim j$& a link connecting node i and j\\
		$\mathcal{P}_{ij}$& Path from node $i$ to node $j$\\
		$A$& Adjacency matrix\\
		$W$& Link weight matrix\\
		$\widetilde{A}$& Weighted adjacency matrix\\
        $B$& Incidence matrix\\
		% $\widetilde{A}_F$& Weighted adjacency matrix of a flow network\\
        $Q$ & Laplacian matrix\\
        $\widetilde{Q}$ & Weighted Laplacian matrix\\
		$Q^\dag$& Pseudoinverse Laplacian matrix\\
		$\Omega$& Effective resistance matrix\\
        $R_G$ & Effective graph resistance\\
		$r_l$& Resistance of link $l$\\
		$S$& Shortest path weight matrix\\
        $d_i$& Degree of node $i$\\
		$\widetilde{\Delta}$& Weighted degree matrix\\
		% $\widetilde{\Delta}_F$& Weighted degree matrix of a flow network\\
		$u$ & All-one vector\\
		$J$ & All-one matrix\\
		$I$ & Identity matrix\\
		$e_i$& $N\times 1$ basic vector has only one\\ &  non-zero element $(e_i)_i = 1$\\
        $x_i$& External current injected into node $i$.\\
        $v_i$& Potential or voltage potential of node $i$\\
        $y_l$& Current through link $l$\\
        % $y_l(s,t)$& Current through link $l$ with node $s$ as source\\ &  and $t$ as destination\\
        $P_G$& Power dissipation in a network $G$\\
        $G_{ij}^*$& Flow subgraph with $(i,j)$ as source-destination\\ &  node pair\\
        $\mathcal{B}$ & Backbone subgraph in the flow subgraph\\
        $T_k$ & k-th brach in the flow subgraph\\
        $\rho_N$& Fraction of node number of the flow subgraph\\ &  normalized by node number of the graph\\
        $\rho_L$& Fraction of link number of the flow subgraph\\ &  normalized by link number of the graph\\
		\hline
	\end{tabular}
\end{table}

% \newpage
% \section{Expected link power dissipation in weighted ER networks}\label{app:Expected link power dissipation in weighted ER networks}
% \begin{figure}[!htbp]
%     \centering

%     % ---------- 子图 (a) ----------
%     \begin{subfigure}[t]{0.45\textwidth}
%         \includegraphics[width=\linewidth]{figure/powerdissipation/link_power_p0.10_withdiffN_cropped.pdf}
%         \caption{Weighted ER graphs $G_{0.1}(N)$}
%         \label{sfig:linkpowerwithdiffNweightedER}
%     \end{subfigure}
%     \hfill
%     % ---------- 子图 (b) ----------
%     \begin{subfigure}[t]{0.45\textwidth}
%         \includegraphics[width=\linewidth]{figure/powerdissipation/link_power_N200_withdiffp_cropped.pdf}
%         \caption{Weighted ER graphs $G_p(200)$}
%         \label{sfig:linkpowerwithdiffpweightedER}
%     \end{subfigure}

%     % ---------- 总标题 ----------
%     \caption{(a) The expected link power dissipation as a function of the size of the graph $N$ for ER networks $G_{p=0.1}(N)$.
%     (b) The expected link power dissipation as a function of the probability $p$ for ER networks $G_p(N=200)$.
%     For each simulation, we generate an ER graph $G_p(N)$ with link weights uniformly distributed in $(0,1)$.
%     We compute the link power dissipation for all possible node pairs and obtain the expected value $E[\Lambda_l]$.
%     The simulation is repeated $100$ times for $N<1000$ and once for $N \geq 1000$.}
%     \label{fig:linkPowerDissipationERweightedER}
% \end{figure}

\section{Lemmas for flow subgraph in graphs with i.i.d. continuous link weights}
\label{app:B}

\begin{lemma}
\label{lem:weighted_prob0_equipotential}
Consider the resistor network model in Section~\ref{sec:Electrical resistor network},
where the link resistances $\{r_l : l\in\mathcal L\}$ are mutually independent
continuous random variables.
Fix a source--destination pair $(i,j)$ and inject a unit current.

Let $s\neq t$ be two nodes connected by a path that contains at least one link
with nonzero current. Then
\[
\Pr[v_s = v_t] = 0 .
\]
\end{lemma}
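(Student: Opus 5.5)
The plan is to write the potential difference $v_s-v_t$ as a rational function of the link weights $w_l=1/r_l$. We then show that its numerator is a polynomial that is not identically zero. Finally we use the fact that such a polynomial vanishes with probability zero under independent continuous weights.

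\emph{Step 1 (algebraic representation).} From \eqref{Eq:voltage vs external current} we have $v_s-v_t=(e_s-e_t)^T\widetilde Q^{\dagger}(e_i-e_j)$. By the all-minors matrix-tree theorem (Kirchhoff's formula for transfer currents), this equals
\[
v_s-v_t=\frac{F(is\,|\,jt)-F(it\,|\,js)}{T(w)} .
\]
Here $T(w)=\sum_{\text{spanning trees}}\prod_{l\in\text{tree}}w_l>0$. The quantity $F(ab\,|\,cd)$ is the weighted sum, over spanning $2$-forests of $G$ in which $a,b$ lie in one tree and $c,d$ in the other, of the product of their link weights. Every monomial carries coefficient $+1$, and each $2$-forest belongs to at most one of the two classes. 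Hence the numerator $P(w)=F(is|jt)-F(it|js)$ is a polynomial with no cancellation between its monomials. So $P\not\equiv0$ if and only if some spanning $2$-forest separates $\{i,s\}$ from $\{j,t\}$, or $\{i,t\}$ from $\{j,s\}$.

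\emph{Step 2 (probability zero).} We show that a nonzero polynomial $P$ in independent atomless variables $w_1,\dots,w_L$ satisfies $\Pr[P(w)=0]=0$, by induction on $L$. Write $P=\sum_k c_k(w_2,\dots,w_L)\,w_1^k$ with some $c_k\not\equiv0$. By the induction hypothesis, that $c_k$ is nonzero almost surely. On this event the conditional polynomial in $w_1$ is nonzero and has finitely many roots. Since $w_1$ is independent of the other weights and atomless, it hits one of these roots with probability $0$. Fubini's theorem then finishes the induction. Note that this step only needs the marginals to have no atoms, not densities. Together with $T(w)>0$, this gives $\Pr[v_s=v_t]=0$ once $P\not\equiv0$.

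\emph{Step 3 (non-degeneracy, the main obstacle).} It remains to turn the hypothesis into the existence of a separating $2$-forest. Let $\mathcal P$ be the $s$--$t$ path and $l=m\sim n\in\mathcal P$ a link with $y_l\neq0$ (for some realization; this is all that is needed for the numerator of $y_l$ to be a nonzero polynomial). By Step 1 applied to $y_l=w_l(v_m-v_n)$, there is a spanning $2$-forest separating, say, $\{i,m\}$ from $\{j,n\}$. Removing $l$ from a spanning tree containing the $i$--$j$ connection in this way suggests the construction below.

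I would try to modify such a forest along $\mathcal P$. Cut $\mathcal P$ at $l$, and attach the subpath from $s$ to $m$ to the tree containing $i$ and the subpath from $n$ to $t$ to the tree containing $j$. We must ensure no cycles are created and the two trees remain disjoint, swapping links of the original forest for path links where needed. The delicate point is exactly this exchange argument. The equipotential examples (nodes in a dangling branch, or symmetric pairs in Fig.~\ref{sfig:unweighted_flow_ex}) show that the simple-path condition through a current-carrying link is what must be used. If the direct exchange fails, my fallback is a limiting weight assignment. Give the links of $\mathcal P$ on each side of $l$ very large conductance, so that $s$ is tied to $m$ and $t$ to $n$. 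Then $v_s-v_t\to v_m-v_n\neq0$ in this limit, by continuity of the rational function. This exhibits a point where $P\neq0$.
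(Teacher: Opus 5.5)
Your Steps 1 and 2 are sound. They are more careful than the paper's own argument, which conditions on the other path resistances and treats $v_s-v_t=\sum_{l\in\mathcal P_{st}}y_l r_l$ as affine in $r_{l_0}$, even though the currents $y_l$ depend on $r_{l_0}$ and on resistances off the path; that argument also never uses that the path is simple.

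The genuine gap is Step 3. Once Steps 1--2 are in place, Step 3 is the whole content of the lemma, and it is left unproved. The exchange argument is only announced, and it can break when the subpath of $\mathcal P$ on one side of $l$ runs through $i$ or $j$. The fallback is wrong as stated. Letting the subpath conductances tend to infinity does not produce $v_m-v_n$ of the original network. It produces the voltage drop across $l$ in the network where those subpaths are contracted, and $y_l\neq0$ in $G$ says nothing about that contracted network.

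Concretely, take the $4$-cycle with links $i\sim j$, $j\sim m$, $m\sim n$, $n\sim i$, and let $s=i$, $t=n$, $\mathcal P=(i,j,m,n)$, $l=m\sim n$.
\begin{itemize}
\item The link $l$ carries current, since it lies on the $i$--$j$ path $i,n,m,j$.
\item Shorting $i\sim j$ and $j\sim m$ shorts the source to the sink, so $v_s-v_t\to0$ and the fallback gives nothing.
\item The only spanning $2$-forest witnessing $y_l\neq0$ consists of the links $i\sim n$ and $j\sim m$. The subpath $(i,j,m)$ cannot be attached to either of its trees without joining $i$ to $j$, so the exchange fails too.
\end{itemize}
Surgery at a single current-carrying link does not control how the rest of $\mathcal P$ sits relative to $i$ and $j$. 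That is exactly where simplicity of $\mathcal P$ must be used.

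A global argument closes the gap.
\begin{itemize}
\item By your Step 1, $P\not\equiv0$ if and only if $G$ has two vertex-disjoint paths, one starting at $i$ and one at $j$, ending at $s$ and $t$ in some order (one-vertex paths allowed). The tree paths of a separating spanning $2$-forest give such a pair. Conversely, such a pair extends to a separating spanning $2$-forest because $G$ is connected.
\item If no such pair exists, Menger's theorem gives a single vertex $c$ meeting every path from $\{i,j\}$ to $\{s,t\}$. Hence the components $K_s$, $K_t$ of $G-c$ containing $s$, $t$ (when these differ from $c$) contain neither $i$ nor $j$.
\item Such a component receives no external current and is joined to the rest of $G$ only through $c$. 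So $v\equiv v_c$ on it, and every link with both endpoints in $K_s\cup\{c\}$ or $K_t\cup\{c\}$ carries zero current, for every choice of resistances.
\item A simple $s$--$t$ path passes through $c$ at most once. It therefore stays inside $\{c\}\cup K_s\cup K_t$ and uses only such links, contradicting the hypothesis.
\end{itemize}
Together with your Step 2, this proves the lemma.
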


\begin{proof}
For each link $l\in\mathcal L$, Ohm's law gives the voltage drop $U_l$ on link $l$ as
\[
U_l = r_l\, y_l .
\]
Fix a path $\mathcal{P}_{st}$ from node $s$ to node $t$.
By Kirchhoff's voltage law, the nodal potential difference between nodes $s$ and $t$ equals the signed sum of the voltage drops along the path $\mathcal{P}_{st}$,
\[
v_s - v_t
= \sum_{l\in \mathcal{P}_{st}}  U_l
= \sum_{l\in \mathcal{P}_{st}}  y_l\, r_l .
\]
Consequently, the equality of the nodal potentials at $s$ and $t$ is equivalent to the condition
\[
v_s = v_t
\quad\Longleftrightarrow\quad
\sum_{l\in \mathcal{P}_{st}} y_l\, r_l = 0 .
\]
Since the path $\mathcal{P}_{st}$ contains at least one link $l_0$ with nonzero current $y_{l_0}\neq 0$, the coefficients
$\{ y_l : l\in \mathcal{P}_{st}\}$ are not all zero.
Define the scalar random variable
\[
S := \sum_{l\in \mathcal{P}_{st}} y_l\, r_l .
\]
Conditioning on the resistances $\{r_l : l\in \mathcal{P}_{st},\, l\neq l_0\}$, the random variable $S$ is an affine function of the independent continuously distributed resistance $r_{l_0}$.
Therefore,
\[
\Pr\!\left[S = 0 \mid \{r_l : l\in \mathcal{P}_{st},\, l\neq l_0\}\right] = 0,
\]
because a continuously distributed random variable attains any prescribed value with probability zero
(see, e.g., \cite[Sec.~5.3]{ross2014first}).
This yields $\Pr[v_s = v_t] = 0$.
\end{proof}

\begin{lemma}
% \begin{lemma}[Property~\ref{property1} implies flow-subgraph membership in weighted graphs]
\label{lem:prop1_weighted_node_in_FS}
Consider the resistor network model in Section~\ref{sec:Electrical resistor network}.
Assume that the link resistances $\{r_l:l\in\mathcal L\}$ are mutually independent random variables with continuous distributions.
Fix a source--destination pair $(i,j)$ and consider the associated flow subgraph $G^*_{ij}$.
Let $w\neq i,j$ be a node satisfying Property~1, namely, there exist two distinct neighbors
$m\neq n$ such that the links $l_{wm}=w\sim m$ and $l_{wn}=w\sim n$ belong to $\mathcal L(G)$ and the nodes
$m,n$ belong to the flow subgraph node set $\mathcal N(G^*_{ij})$.
If the nodal potentials at $m$ and $n$ satisfy $v_m\neq v_n$, then the node $w$ belongs to the flow subgraph node set,
\[
w\in \mathcal N(G^*_{ij}) .
\]
\end{lemma}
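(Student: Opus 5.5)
The argument is purely deterministic and rests on Ohm's law in the form \eqref{eq:yl=0}. The randomness of the resistances is not needed for the implication itself; it only matters upstream, where Lemma~\ref{lem:weighted_prob0_equipotential} guarantees that the hypothesis $v_m\neq v_n$ holds almost surely for flow-subgraph nodes. I will argue by contradiction on the potential $v_w$ of node $w$.

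First, fix the unit-current injection at $(i,j)$. The potentials $v=\widetilde{Q}^\dagger(e_i-e_j)$ are then well-defined real numbers, since every $r_l$ is a finite positive resistance. Suppose $w\notin\mathcal N(G^*_{ij})$. By the definition of the node set of the flow subgraph, no link incident to $w$ lies in $\mathcal L(G^*_{ij})$. In particular the two links $l_{wm}=w\sim m$ and $l_{wn}=w\sim n$, which exist in $\mathcal L(G)$ by assumption, both carry zero current: $y_{l_{wm}}=y_{l_{wn}}=0$.

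Next, apply \eqref{eq:yl=0}, i.e.\ Ohm's law $v_w-v_m=r_{wm}y_{l_{wm}}$ with $0<r_{wm}<\infty$, to each of these links. This gives $v_w=v_m$ and $v_w=v_n$, hence $v_m=v_n$, which contradicts the hypothesis $v_m\neq v_n$.

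Equivalently, stated directly: $v_w$ cannot coincide with both $v_m$ and $v_n$. Therefore at least one of the potential differences $v_w-v_m$ or $v_w-v_n$ is nonzero. The corresponding link then carries nonzero current $y=(v_w-v_\cdot)/r$, so it belongs to $\mathcal L(G^*_{ij})$, and its endpoint $w$ belongs to $\mathcal N(G^*_{ij})$.

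There is no real obstacle here. The only points needing care are that resistances are strictly positive and finite, so that zero current is equivalent to equal potentials in both directions, and that the assumption $m\neq n$ with both links present in $G$ is exactly what supplies two distinct Ohm's-law constraints on $v_w$. The hypotheses that $m,n\in\mathcal N(G^*_{ij})$ and that the weights are continuous are not used in this step.
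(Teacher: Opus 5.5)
Your proof is correct and is essentially the paper's argument: assuming $w\notin\mathcal N(G^*_{ij})$ forces zero current on both links $w\sim m$ and $w\sim n$. Ohm's law with strictly positive resistances then gives $v_m=v_w=v_n$, contradicting $v_m\neq v_n$; your remark that neither the continuity of the resistances nor the hypothesis $m,n\in\mathcal N(G^*_{ij})$ enters this step holds equally for the paper's proof.
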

\begin{proof}
Assume that the potentials at the two neighbors $m$ and $n$ of node $w$ satisfy $v_m\neq v_n$.
Suppose, for contradiction, that the node $w$ does not belong to the flow subgraph node set, namely,
$w\notin \mathcal N(G^*_{ij})$.
Then every link incident to $w$ carries zero current, in particular the link currents on
$l_{wm} = w \sim m$ and $l_{wn} = w \sim n$ satisfy
\[
y_{l_{wm}}=0
\quad\text{and}\quad
y_{l_{wn}}=0 .
\]
For the link $l_{wu}=w\sim m$ with resistance $r_{l_{wm}}>0$, Ohm's law $v_w-v_m=r_{l_{wm}}\,y_{l_{wm}}$ yields
\[
v_w=v_m .
\]
For the link $l_{wn}=w\sim n$ with resistance $r_{l_{wn}}>0$, Ohm's law $v_w-v_n=r_{l_{wn}}\,y_{l_{wn}}$ yields
\[
v_w=v_n .
\]
Hence $v_m=v_n$, which contradicts the assumption $v_m\neq v_n$.
Therefore, at least one of the two link currents $y_{l_{wm}}$ or $y_{l_{wn}}$ is nonzero and the node $w$ is incident to a nonzero-current link.
By the definition of the flow subgraph, this implies $w\in \mathcal N(G^*_{ij})$.
\end{proof}

\section{Performance of RGP algorithm}
\label{app:Performance of RGP algorithm}
\begin{figure*}[]
    \centering
    % ---------- 第一行 ----------
    \begin{subfigure}[t]{0.32\textwidth}
        \includegraphics[width=1\linewidth]{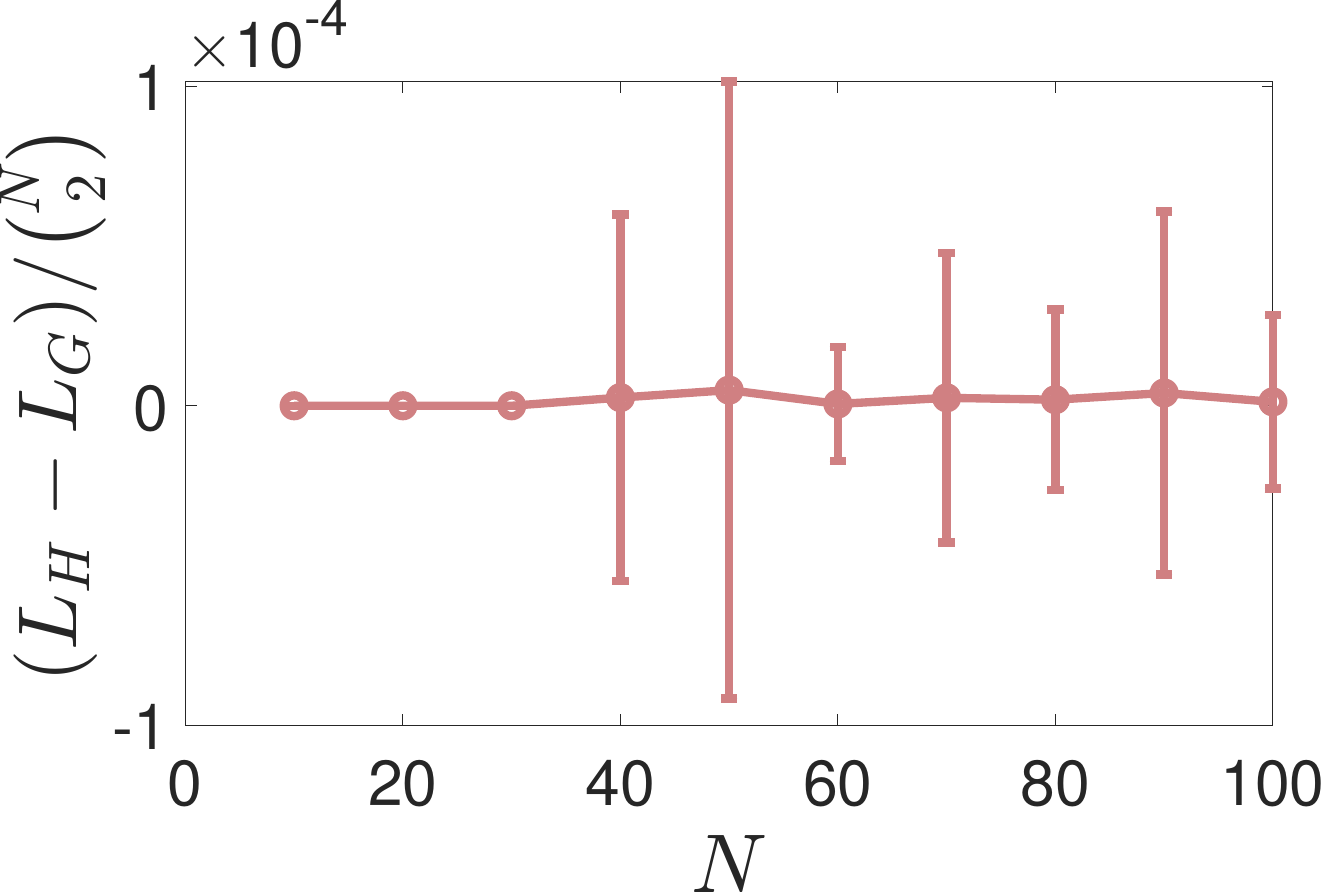}
        \caption{Normalized number of additional links}
        \label{sfig:treelinkremovalnum}
    \end{subfigure}
    \begin{subfigure}[t]{0.32\textwidth}
        \includegraphics[width=1\linewidth]{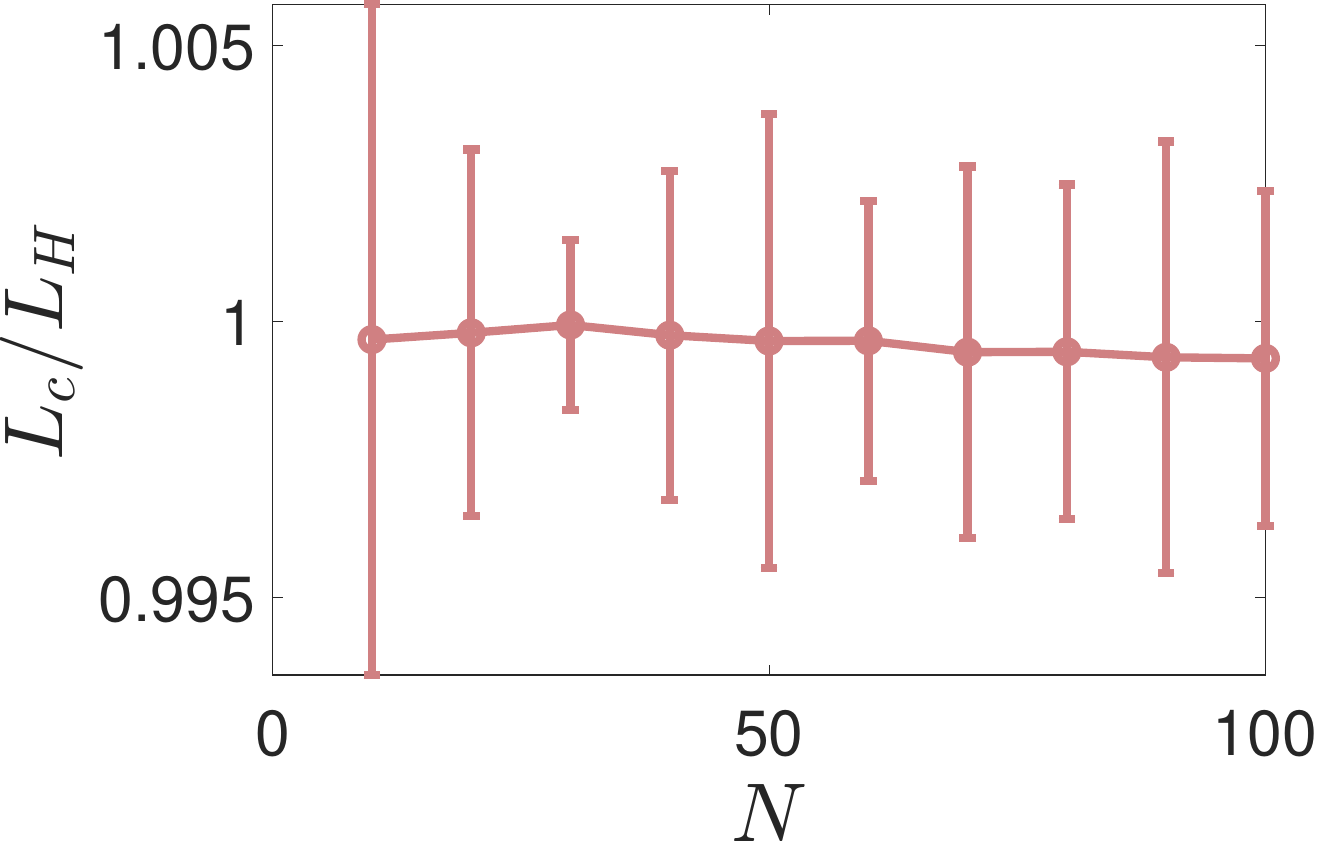}
        \caption{Normalized number of common links}
        \label{sfig:treecommonlink}
    \end{subfigure}
    % ---------- 第二行 ----------
    \begin{subfigure}[t]{0.32\textwidth}
        \includegraphics[width=1\linewidth]{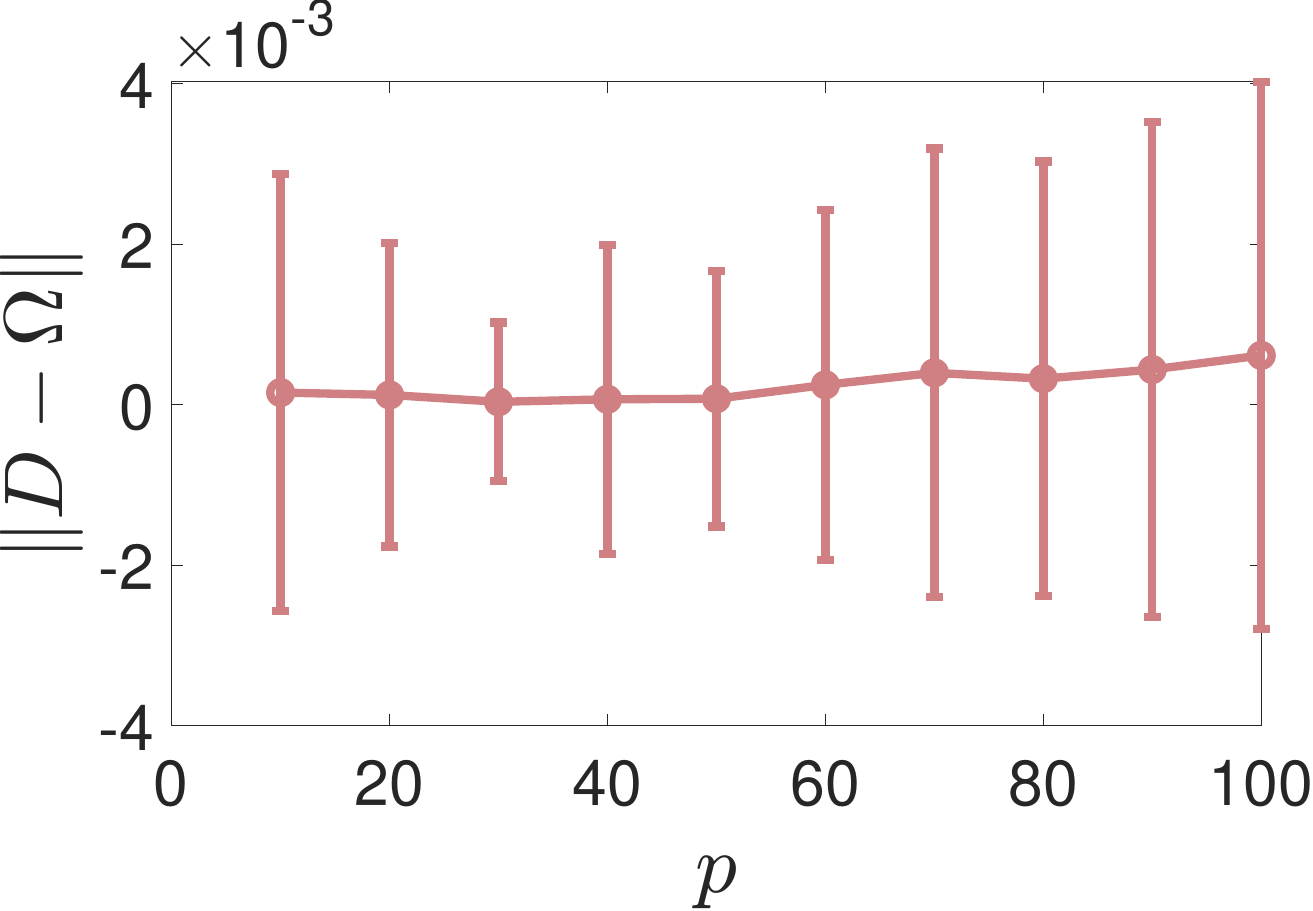}
        \caption{Norm difference}
        \label{sfig:treenorm}
    \end{subfigure}
    % ---------- 总标题 ----------
    \caption{Performance evaluation of RGP algorithm with tree baseline graphs. (a) Normalized number of additional links in the resulting graph $H$ compared with the baseline $G$.
    (b) Normalized number of common links in the baseline graph $G$ and the resulting graph $H$.
    (c) shows the norm between the demand matrix $D$ and the resulting effective resistance matrix $\Omega$. 
    The x-axis of all three panels represents the number of nodes $N$ of tree graphs. 
    For each graph size $N$, 1000 trees are generated as baseline graphs, with integer link weights independently and uniformly drawn from $[1,10]$.
    Error bars denote the standard deviation.}
    \label{fig:performance_tree}
\end{figure*}

\begin{figure*}[t]
    \centering
    \begin{subfigure}[t]{0.32\textwidth}
        \includegraphics[width=\linewidth]{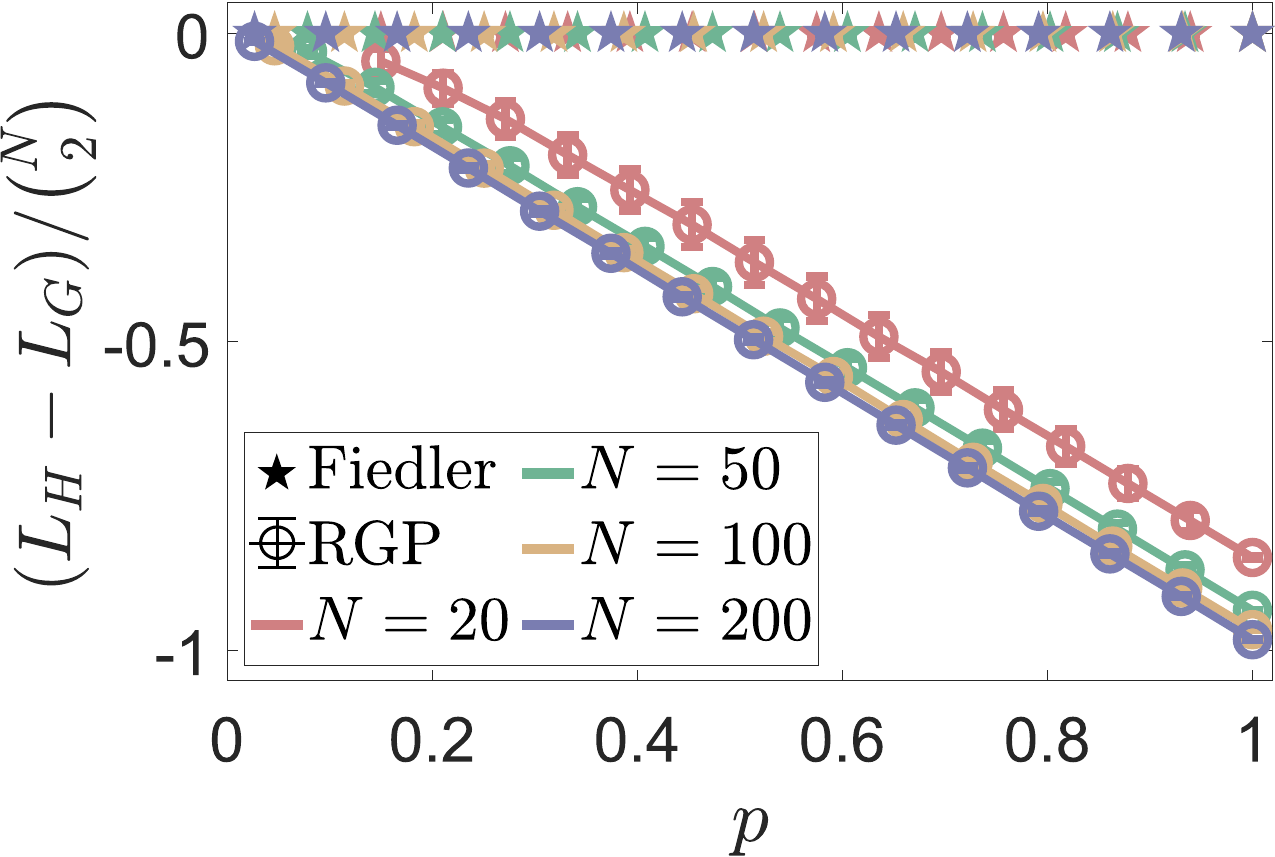}
        \caption{Normalized number of additional links}
        \label{sfig:ERlinkremovalnum}
    \end{subfigure}
    \begin{subfigure}[t]{0.32\textwidth}
        \includegraphics[width=\linewidth]{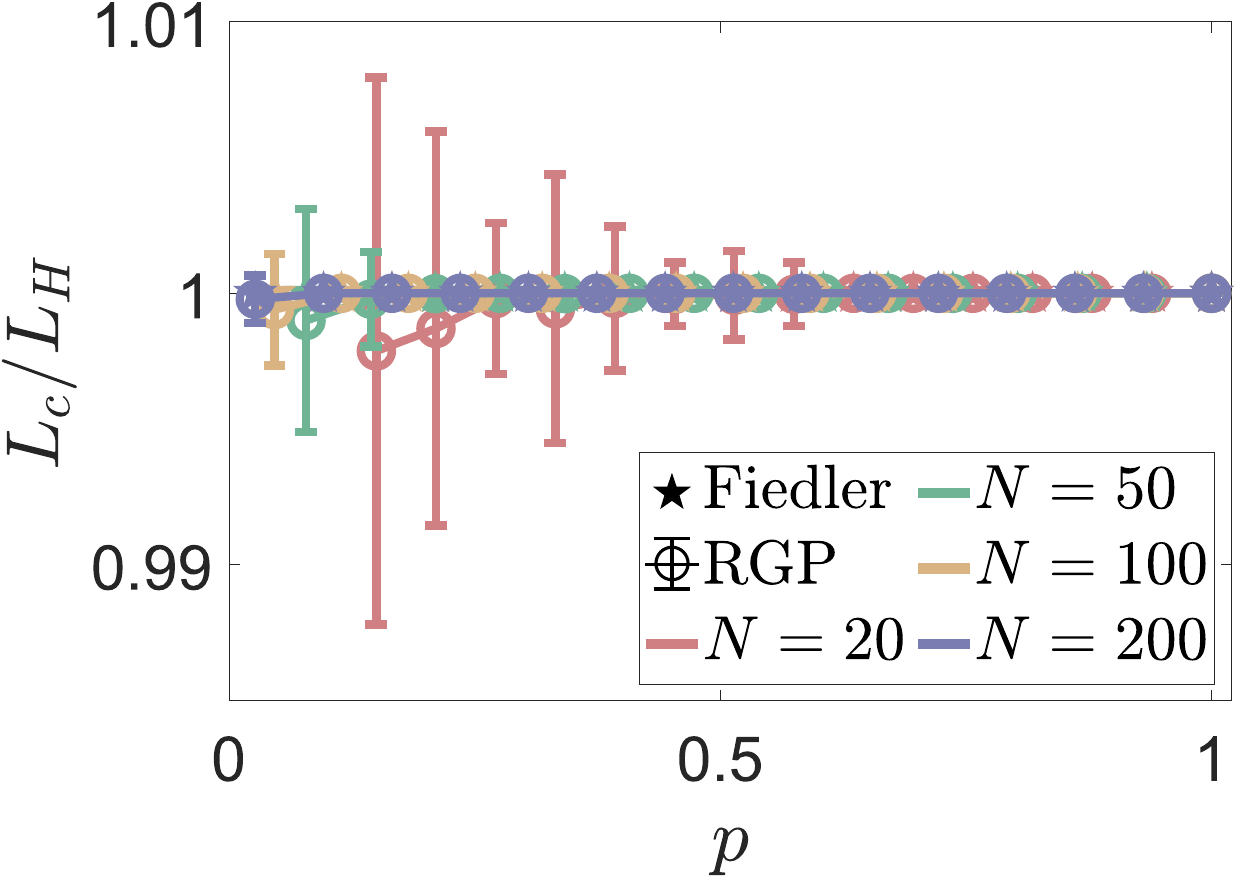}
        \caption{Normalized number of common links}
        \label{sfig:ERcommonlink}
    \end{subfigure}
    \begin{subfigure}[t]{0.32\textwidth}
        \includegraphics[width=\linewidth]{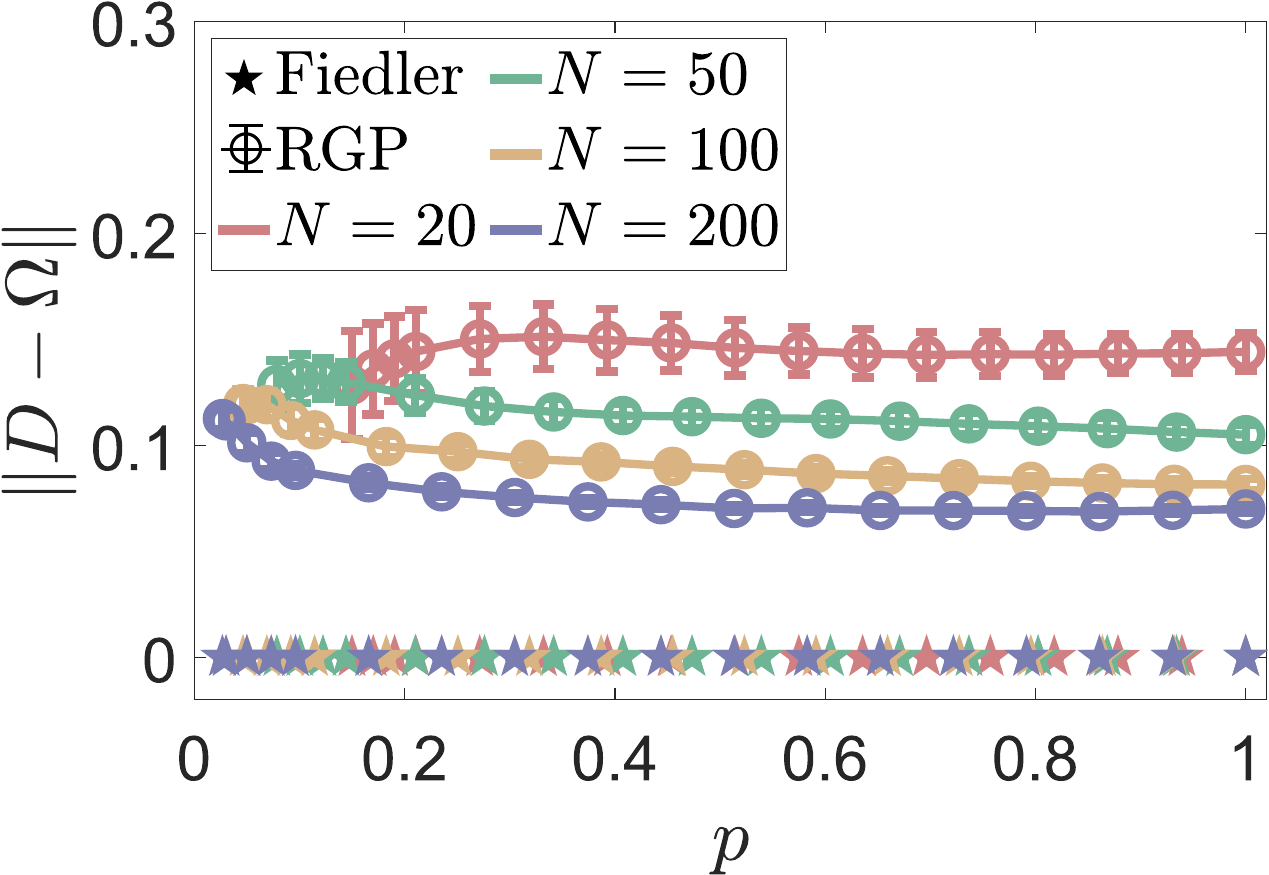}
        \caption{Norm difference}
        \label{sfig:ERnorm}
    \end{subfigure}
    \caption{Performance evaluation of the RGP algorithm (RGP) compared with the Fiedler approach (Fiedler).
    (a) Normalized number of additional links in the resulting graph $H$ compared with the baseline $G$.
    (b) Normalized number of common links in the baseline graph $G$ and the resulting graph $H$.
    (c) shows the norm between the demand matrix $D$ and the resulting effective resistance matrix $\Omega$. 
    The x-axis of all three panels represents the connection probability $p$ of ER graphs $G_p(N)$. 
    For each graph size $N$ and probability $p$, 1000 ER graphs with uniformly distributed link weights are generated as baseline graphs.
    Error bars indicate the standard deviation.}
    \label{fig:performanceRGP}
\end{figure*}

\begin{figure*}[!]
    \centering
    % ---------- 第一行 ----------
    \begin{subfigure}[t]{0.32\textwidth}
        \includegraphics[width=1\linewidth]{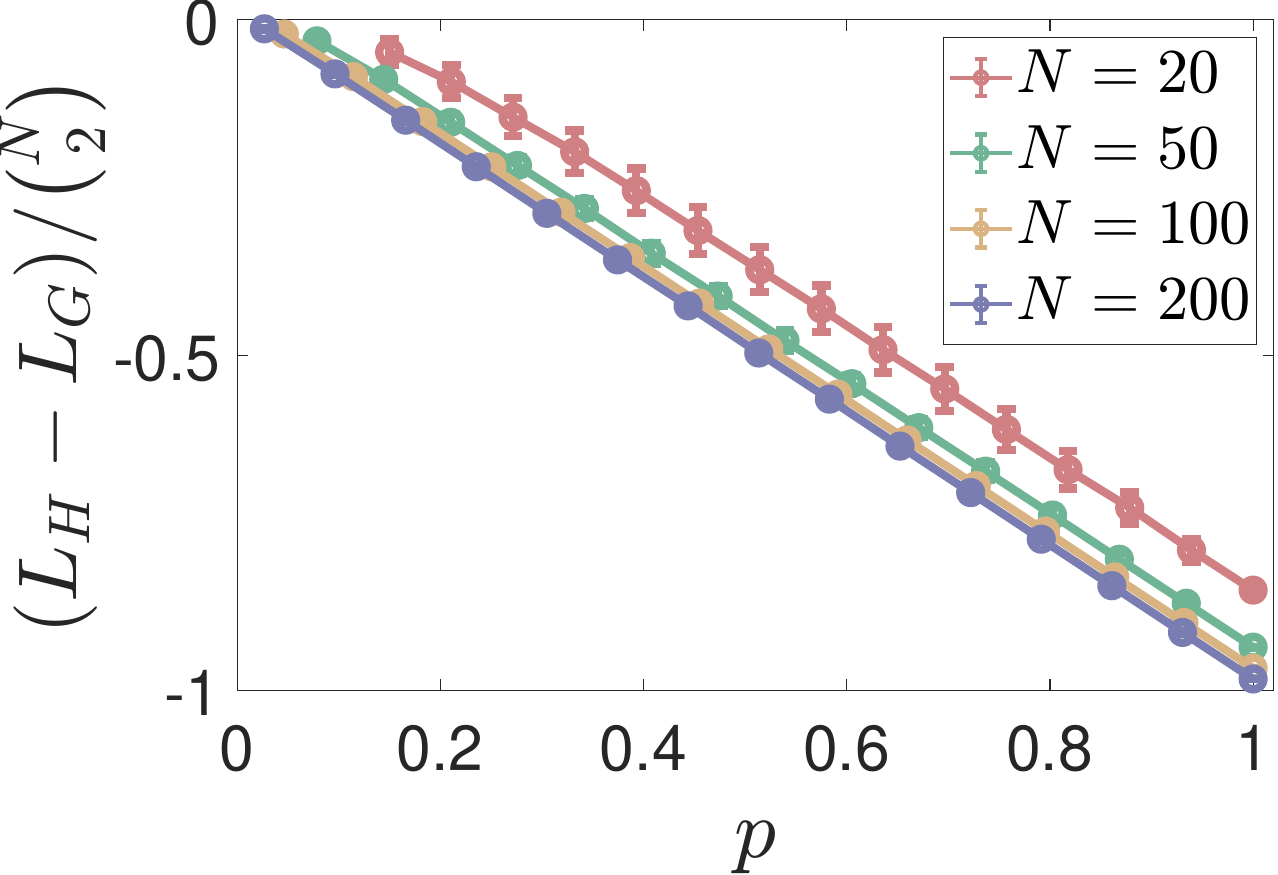}
        \caption{Normalized number of additional links}
        \label{sfig:ERlinkremovalnum_exp}
    \end{subfigure}
    \begin{subfigure}[t]{0.32\textwidth}
        \includegraphics[width=1\linewidth]{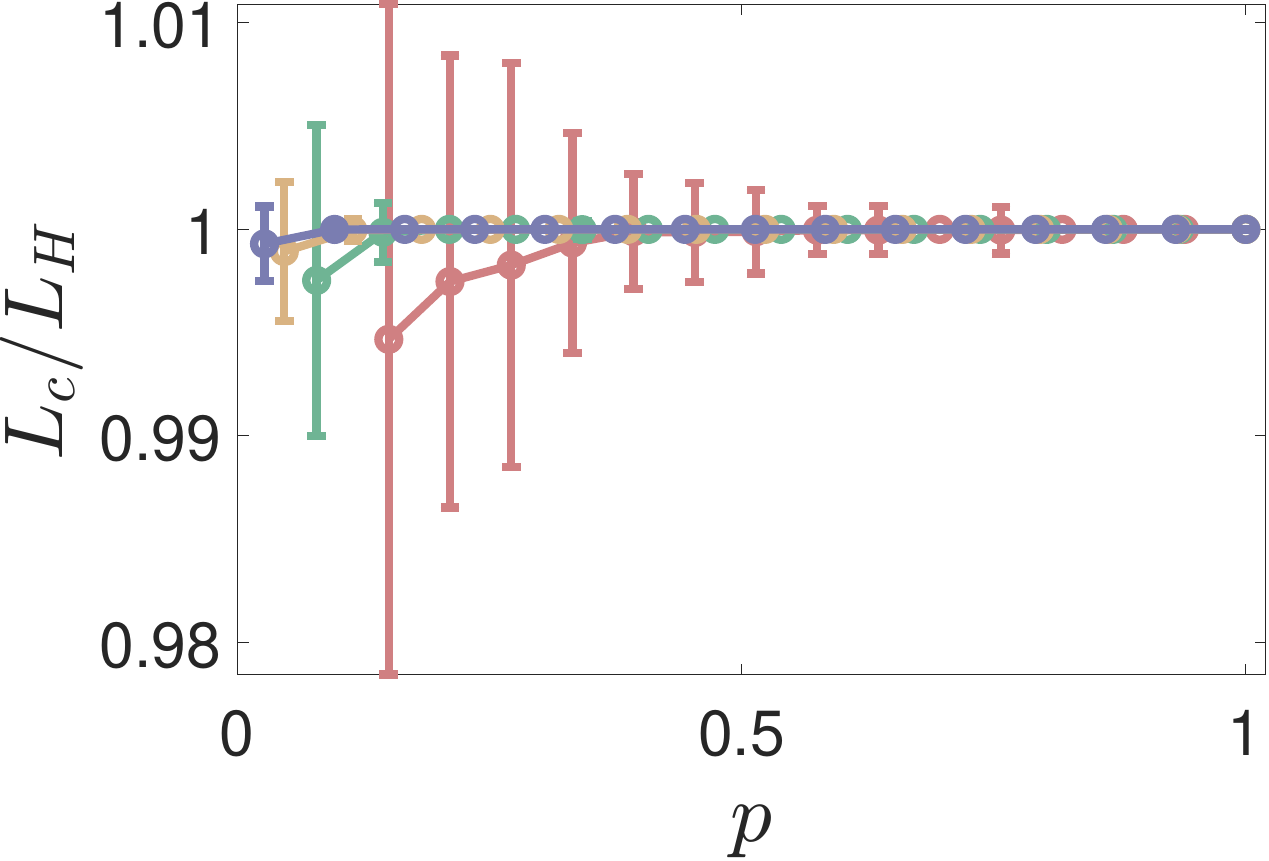}
        \caption{Normalized number of common links}
        \label{sfig:ERcommonlink_exp}
    \end{subfigure}
    % ---------- 第二行 ----------
    \begin{subfigure}[t]{0.32\textwidth}
        \includegraphics[width=1\linewidth]{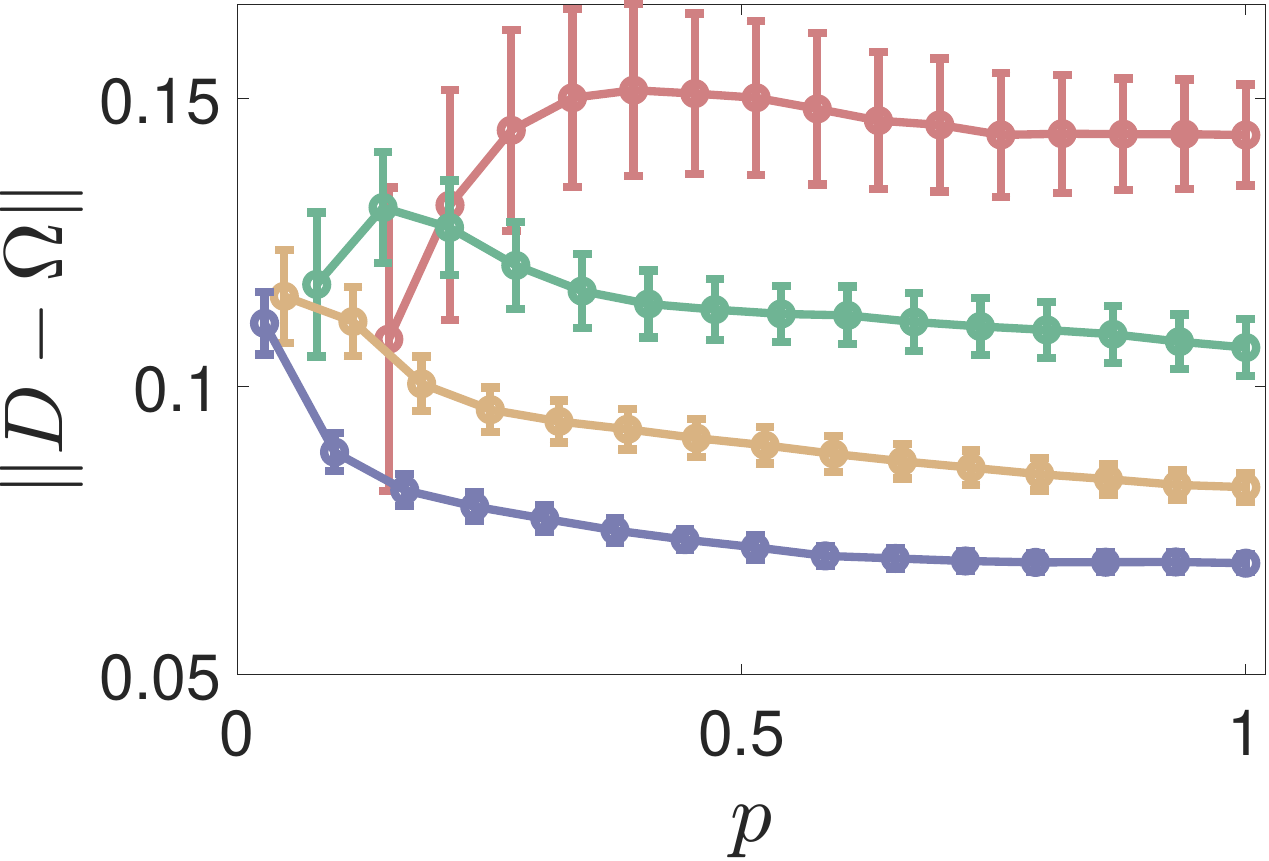}
        \caption{Norm difference}
        \label{sfig:ERnorm_exp}
    \end{subfigure}
    % ---------- 总标题 ----------
    \caption{(a) Normalized number of additional links in the resulting graph $H$ compared with the baseline $G$.
    (b) Normalized number of common links in the baseline graph $G$ and the resulting graph $H$.
    (c) Norm between the demand matrix $D$ and the resulting effective resistance matrix $\Omega$. 
    The x-axis of all three panels represents the connection probability $p$ of ER graphs $G_p(N)$. 
    For each graph size $N$ and probability $p$, 1000 ER graphs are generated as baseline graphs, each with link weights drawn from an exponential distribution with mean $0.5$.
    Error bars indicate the standard deviation.}
    \label{fig:performance_ERexp}
\end{figure*}

In this section, we evaluate the performance of the RGP algorithm.
We introduce a benchmark approach \quotes{Fiedler approach}, which executes \eqref{equation:getAtilde} by substituting the effective resistance matrix $\Omega$ with an input demand matrix $D$, as a reference.
Since the Fiedler approach is only feasible when the demand matrix is graph-realizable, our experiment is conducted given demand $D = \Omega_G$, where $\Omega_G$ is the effective resistance matrix of a randomly generated graph $G$.  
Specifically, for each experiment trial, we first generate a \quotes{baseline} graph $G$ and compute the corresponding effective resistance matrix $\Omega_G$.
With the demand matrix $D = \Omega_G$ as input, RGP and the Fiedler approach are respectively executed to obtain the resulting graph $H$ and the corresponding effective resistance matrix $\Omega$.
We then measure how the resulting graph $H$, produced by RGP, differs from the benchmark approach.
The performance of RGP is then assessed by three complementary criteria: (\romannumeral 1) the normalized number $\frac{2}{N(N-1)}(L_H-L_G)$ of additional links in the resulting graph $H$, (\romannumeral 2) the number of common links $L_c = \frac{1}{2} u^T \left(A\circ A_H\right) u$ shared by the baseline graph $G$ and the resulting graph $H$ and (\romannumeral 3) the norm $||D-\Omega||=\frac{1}{N(N-1)}\sum_{i}\sum_{j}\frac{|d_{ij}-\omega_{ij}|}{d_{ij}}$ of the demand matrix $D$ and the effective resistance matrix $\Omega$.

\begin{table}[h]
	\caption{Performance of RGP with empirical networks as baseline.}
	\label{tab:RGP empirical network}
	\centering
	\begin{tabular}{|c c c c c c|}
		\hline
		Network & $N$ & $L_G$ & $L_H-L_G$ & $L_c/L_H$ & $||D-S||$\\
		\hline
		karate   & 34&78&-35&1&0.1533\\
		dolphins   & 62&159&-79&1&0.1273\\
		NewSpainTravelMap   & 224&241&-13&1&0.0491\\
		wiki-vote   & 889&2914&-1745 &1&0.1045\\
		\hline
	\end{tabular}
\end{table}

We first consider the demand $D = \Omega_G$ derived from the sparsest graph, tree.
Given a tree graph $G$, the RGP algorithm can generally recover a graph $H = G$ with $D= \Omega_G$ as input, see Fig.~\ref{fig:performance_tree}.
In a tree graph, there exists a unique path $\mathcal{P}_{ij}$ between an arbitrary node pair $(i,j)$ and the effective resistance $\omega_{ij}$ is the sum of the resistances $r_l$ of links along the path $\mathcal{P}_{ij}$.
Starting from the initial complete graph (lines 1-3 in Algorithm~\ref{alg:RGP}) by RGP, whose link weights $W$ are the element-wise reciprocal of the demand $D$ over nonzero elements, any link $l = i \sim j$ that is absent in the baseline tree $G$ has a relatively high resistance $r_{l}$, as node $i$ and $j$ can already connect with each other via the path $\mathcal{P}_{ij}$ in the baseline tree $G$. 
Moreover, according to parallel series rules, the effective resistance $\omega_{ij}$ of a link $l = i \sim j$ that exists in the complete graph but not in the baseline tree $G$ is significantly smaller than the demand $d_{ij}$.
The RGP algorithm can then leverage both the high resistance $r_{ij}$ and large difference $d_{ij} - \omega_{ij}$ to remove redundant effective resistance links, ultimately recovering a tree graph $H$ that matches the given tree graph $G$.

We then extend the performance evaluation using demands $D = \Omega_G$ computed from ER graphs with different link densities $p$, Fig.~\ref{fig:performanceRGP}.
Specifically, for each size of the graph $N$ and link density $p$, 1000 ER graphs $G_p(N)$ with link weight uniformly distributed in $(0,1)$ are generated.
As shown in Fig.~\ref{fig:performanceRGP}, the Fiedler approach can exactly recover the given baseline graph $G_p(N)$, which is expected.
Compared to the benchmark Fiedler approach, RGP generally produces graphs $H$ with fewer links (Fig.~\ref{sfig:ERlinkremovalnum}).
The normalized number of additional links of resulting graphs $H$ by RGP decreases linearly as the link density $p$ increases. 
% The resulting graphs $H$ have a similar number of links $L_H$, irrespective of the number $L_G$ of links in the baseline graph $G$.
Fig.~\ref{sfig:ERcommonlink} presents the common links $L_c$ shared by graph $H$ and $G$ normalized by link number $L_H$.
The ratio $\frac{L_c}{L_H}$ is approximately $1$, indicating that nearly all the links in the reconstructed graph $H$ also exist in the given graph $G$.
In Fig.~\ref{sfig:ERnorm}, we further investigate the difference between the effective resistance matrix $\Omega$ of the resulting graph by RGP and the demand $D$.
% The norm $\| D-S \|$ exhibits a nonmonotonic behavior as a function of the link density $p$, increasing at very low link densities and decreasing as $p$ becomes larger for small size of the graph $N$.
The RGP algorithm performs better in graphs with a larger size $N$.

The feasibility of RGP is also examined for demands derived from ER graphs with different link weight distributions.
A similar pattern as Fig.~\ref{fig:performanceRGP} is visible in Fig.~\ref{fig:performance_ERexp}, where the link weights of the given ER graphs exhibit an exponential distribution with an average equaling $0.5$.
Further, in Table~\ref{tab:RGP empirical network}, we demonstrate the consistent performance of RGP across demands generated from different empirical networks~\cite{data, arzobispado_mexico_plano}.

% \section{Performance of RGP algorithm}
% \label{app:Performance of RGP algorithm}
% In this section, we supplement the performance of RGP in tree graphs, ER random graphs $G_p(N)$ with exponentially distributed link weights and four empirical networks~\cite{data, arzobispado_mexico_plano}.
% We use the demand matrix $D = \Omega_G$ as input, where $\Omega_G$ is the effective resistance of a randomly generated ER graph or an empirical network $G$.
% The performance of RGP is tested by (\romannumeral 1) the normalized number $\frac{2}{N(N-1)}(L_H-L_G)$ of additional links in the resulting graph $H$, (\romannumeral 2) the number of common links $L_c = \frac{1}{2} u^T \left(A\circ A_H\right) u$ shared by the baseline graph $G$ and the resulting graph $H$ and (\romannumeral 3) the norm $||D-\Omega||=\frac{1}{N(N-1)}\sum_{i}\sum_{j}\frac{|d_{ij}-\omega_{ij}|}{d_{ij}}$ of the demand matrix $D$ and the effective resistance matrix $\Omega$.
% Figs.~\ref{fig:performance_tree},~\ref{fig:performance_ERexp} and Table~\ref{tab:RGP empirical network} demonstrate the results for tree graphs, ER graphs and empirical networks, respectively.

\newpage
\bibliography{apssamp}% Produces the bibliography via BibTeX.

@PREAMBLE{
 "\providecommand{\noopsort}[1]{}" 
 # "\providecommand{\singleletter}[1]{#1}%" 
}

@article{GERNANDT202229,
title = {A Calderón type inverse problem for tree graphs},
journal = {Linear Algebra and its Applications},
volume = {646},
pages = {29-42},
year = {2022},
issn = {0024-3795},
doi = {https://doi.org/10.1016/j.laa.2022.03.018},
url = {https://www.sciencedirect.com/science/article/pii/S0024379522001197},
author = {Hannes Gernandt and Jonathan Rohleder}
}

@article{Alessandrini01011988,
author = {Giovanni Alessandrini},
title = {Stable determination of conductivity by boundary measurements},
journal = {Applicable Analysis},
volume = {27},
number = {1-3},
pages = {153--172},
year = {1988},
publisher = {Taylor \& Francis},
doi = {10.1080/00036818808839730},


URL = { 
    
        https://doi.org/10.1080/00036818808839730
    
    

},
eprint = { 
    
        https://doi.org/10.1080/00036818808839730
    
    

}

}

@inproceedings{data,
     title={The Network Data Repository with Interactive Graph Analytics and Visualization},
     author={Ryan A. Rossi and Nesreen K. Ahmed},
     booktitle={AAAI},
     year={2015}
}

@article{newman2003structure,
  author       = {Newman, M. E. J.},
  title        = {The Structure and Function of Complex Networks},
  journal      = {SIAM Review},
  volume       = {45},
  number       = {2},
  pages        = {167--256},
  year         = {2003},
  doi          = {10.1137/S003614450342480},
  url          = {https://doi.org/10.1137/S003614450342480}
}

@inproceedings{hassidim2013network,
  title={Network utilization: The flow view},
  author={Hassidim, Avinatan and Raz, Danny and Segalov, Michal and Shaqed, Ariel},
  booktitle={2013 Proceedings IEEE INFOCOM},
  pages={1429--1437},
  year={2013},
  organization={IEEE}
}

@article{aldous2008optimal,
  title={Optimal spatial transportation networks where link costs are sublinear in linkcapacity},
  author={Aldous, David J},
  journal={Journal of Statistical Mechanics: Theory and Experiment},
  volume={2008},
  number={03},
  pages={P03006},
  year={2008},
  publisher={IOP Publishing}
}

@incollection{holme2009diplomat,
  title={The diplomat’s dilemma: Maximal power for minimal effort in social networks},
  author={Holme, Petter and Ghoshal, Gourab},
  booktitle={Adaptive networks: Theory, models and applications},
  pages={269--288},
  year={2009},
  publisher={Springer}
}

@article{jackson2007study,
  title={The study of social networks in economics},
  author={Jackson, Matthew O},
  journal={The missing links: Formation and decay of economic networks},
  volume={76},
  pages={210--225},
  year={2007},
  publisher={Russell Sage Foundation, Thousand Oaks.[812] Jadbabaie, Ali, Pooya Molavi~…}
}

@article{guennebaud2024energy,
  title={Energy consumption of data transfer: Intensity indicators versus absolute estimates},
  author={Guennebaud, Ga{\"e}l and Bugeau, Aur{\'e}lie},
  journal={Journal of Industrial Ecology},
  volume={28},
  number={4},
  pages={996--1008},
  year={2024},
  publisher={Wiley Online Library}
}

@article{nowzari2016analysis,
  title={Analysis and control of epidemics: A survey of spreading processes on complex networks},
  author={Nowzari, Cameron and Preciado, Victor M and Pappas, George J},
  journal={IEEE Control Systems Magazine},
  volume={36},
  number={1},
  pages={26--46},
  year={2016},
  publisher={IEEE}
}

@article{pastor2015epidemic,
  title={Epidemic processes in complex networks},
  author={Pastor-Satorras, Romualdo and Castellano, Claudio and Van Mieghem, Piet and Vespignani, Alessandro},
  journal={Reviews of modern physics},
  volume={87},
  number={3},
  pages={925--979},
  year={2015},
  publisher={APS}
}

@article{qiu2022efficient,
  title={Efficient shortest path counting on large road networks},
  author={Qiu, Y.-X. and Wen, D. and Qin, L. and Li, W. and Li, R.-H. and Zhang, Y.},
  journal={Proceedings of the VLDB Endowment},
  year={2022},
  publisher={Association for Computing Machinery (ACM)}
}

@inproceedings{begtavseviu2001measurements,
  title={Measurements of the Hopcount in Internet},
  author={Begta{\v{s}}evi{\"u}, F and Van Mieghem, P},
  booktitle={Proceedings of Workshop on Passive and Active Measurement (PAM)},
  pages={183--190},
  year={2001}
}

@article{zhan1998shortest,
  title={Shortest path algorithms: an evaluation using real road networks},
  author={Zhan, F Benjamin and Noon, Charles E},
  journal={Transportation science},
  volume={32},
  number={1},
  pages={65--73},
  year={1998},
  publisher={INFORMS}
}

@article{PhysRevLett.85.5468,
  title = {Network Robustness and Fragility: Percolation on Random Graphs},
  author = {Callaway, Duncan S. and Newman, M. E. J. and Strogatz, Steven H. and Watts, Duncan J.},
  journal = {Phys. Rev. Lett.},
  volume = {85},
  issue = {25},
  pages = {5468--5471},
  numpages = {0},
  year = {2000},
  month = {Dec},
  publisher = {American Physical Society},
  doi = {10.1103/PhysRevLett.85.5468},
  url = {https://link.aps.org/doi/10.1103/PhysRevLett.85.5468}
}

@article{erd6s1960evolution,
  title={On the evolution of random graphs},
  author={Erd\H{o}s, P. and R{\'e}nyi, A.},
  journal={Publ. Math. Inst. Hungar. Acad. Sci},
  volume={5},
  pages={17--61},
  year={1960}
}

@article{PhysRevE.64.026118,
  title = {Random graphs with arbitrary degree distributions and their applications},
  author = {Newman, M. E. J. and Strogatz, S. H. and Watts, D. J.},
  journal = {Phys. Rev. E},
  volume = {64},
  issue = {2},
  pages = {026118},
  numpages = {17},
  year = {2001},
  month = {Jul},
  publisher = {American Physical Society},
  doi = {10.1103/PhysRevE.64.026118},
  url = {https://link.aps.org/doi/10.1103/PhysRevE.64.026118}
}

@book{newman2010networks,
    author = {Newman, M. E. J.},
    title = "{Networks: An Introduction}",
    publisher = {Oxford University Press},
    year = {2010},
    month = {03},
    isbn = {9780199206650},
    doi = {10.1093/acprof:oso/9780199206650.001.0001},
    url = {https://doi.org/10.1093/acprof:oso/9780199206650.001.0001}
}

@ARTICLE{8766143,
  author={Zhang, Zhengquan and Xiao, Yue and Ma, Zheng and Xiao, Ming and Ding, Zhiguo and Lei, Xianfu and Karagiannidis, George K. and Fan, Pingzhi},
  journal={IEEE Vehicular Technology Magazine}, 
  title={6G Wireless Networks: Vision, Requirements, Architecture, and Key Technologies}, 
  year={2019},
  volume={14},
  number={3},
  pages={28-41},
  doi={10.1109/MVT.2019.2921208}}

@Article{s22093136,
AUTHOR = {Dicandia, Francesco Alessio and Fonseca, Nelson J. G. and Bacco, Manlio and Mugnaini, Sara and Genovesi, Simone},
TITLE = {Space-Air-Ground Integrated 6G Wireless Communication Networks: A Review of Antenna Technologies and Application Scenarios},
JOURNAL = {Sensors},
VOLUME = {22},
YEAR = {2022},
NUMBER = {9},
ARTICLE-NUMBER = {3136},
URL = {https://www.mdpi.com/1424-8220/22/9/3136},
PubMedID = {35590826},
ISSN = {1424-8220},
DOI = {10.3390/s22093136}
}

@book{van1996matrix,
  title={Matrix computations},
  author={Golub, Gene H and Van Loan, Charles F},
  year={2013},
  publisher={JHU press}
}

@article{carmi2008transport,
  title={Transport in networks with multiple sources and sinks},
  author={Carmi, Shai and Wu, Zhenhua and Havlin, Shlomo and Stanley, H Eugene},
  journal={Europhysics Letters},
  volume={84},
  number={2},
  pages={28005},
  year={2008},
  publisher={IOP Publishing}
}

@article{carmi2007transport,
  title={Transport between multiple users in complex networks},
  author={Carmi, S and Wu, Z and L{\'o}pez, Emiliano and Havlin, Shlomo and Eugene Stanley, H},
  journal={The European Physical Journal B},
  volume={57},
  number={2},
  pages={165--174},
  year={2007},
  publisher={Springer}
}

@article{von2014hitting,
  title={Hitting and commute times in large random neighborhood graphs},
  author={Von Luxburg, Ulrike and Radl, Agnes and Hein, Matthias},
  journal={The Journal of Machine Learning Research},
  volume={15},
  number={1},
  pages={1751--1798},
  year={2014},
  publisher={JMLR. org}
}

@article{sylvester2021random,
  title={Random walk hitting times and effective resistance in sparsely connected Erd{\H{o}}s-R{\'e}nyi random graphs},
  author={Sylvester, John},
  journal={Journal of Graph Theory},
  volume={96},
  number={1},
  pages={44--84},
  year={2021},
  publisher={Wiley Online Library}
}

@article{bellman1958routing,
  title={On a routing problem},
  author={Bellman, Richard},
  journal={Quarterly of applied mathematics},
  volume={16},
  number={1},
  pages={87--90},
  year={1958}
}

@article{schrijver2012history,
  title={On the history of the shortest path problem},
  author={Schrijver, Alexander},
  journal={Documenta Mathematica},
  volume={17},
  number={1},
  pages={155--167},
  year={2012}
}

@book{cormen2022introduction,
  title={Introduction to algorithms},
  author={Cormen, Thomas H and Leiserson, Charles E and Rivest, Ronald L and Stein, Clifford},
  year={2022},
  publisher={MIT press}
}

@article{kitsak2023finding,
  title={Finding shortest and nearly shortest path nodes in large substantially incomplete networks by hyperbolic mapping},
  author={Kitsak, Maksim and Ganin, Alexander and Elmokashfi, Ahmed and Cui, Hongzhu and Eisenberg, Daniel A and Alderson, David L and Korkin, Dmitry and Linkov, Igor},
  journal={Nature Communications},
  volume={14},
  number={1},
  pages={186},
  year={2023},
  publisher={Nature Publishing Group UK London}
}

@article{gomathi2018energy,
  title={Energy efficient shortest path routing protocol for underwater acoustic wireless sensor network},
  author={Gomathi, RM and Martin Leo Manickam, J},
  journal={Wireless Personal Communications},
  volume={98},
  pages={843--856},
  year={2018},
  publisher={Springer}
}

@article{zhang2024mapreduce,
  title={A mapreduce-based approach for shortest path problem in road networks},
  author={Zhang, Dongbo and Shou, Yanfang and Xu, Jianmin},
  journal={Journal of Ambient Intelligence and Humanized Computing},
  pages={1--9},
  year={2024},
  publisher={Springer}
}

@book{fiedler2011matrices,
  title={{Matrices and graphs in geometry}},
  author={Fiedler, M.},
  year={2011},
  publisher={Cambridge University Press},
  address={Cambridge, U.K.}
}

@article{Mieghem2021atree,
  title={A tree realization of a distance matrix: the inverse shortest path problem with a demand matrix generated by a tree},
  author={Van Mieghem, P.},
  journal={Delft University of Technology, Report20211012},
  year={2021},
  pages = {1--15},
}

@article{van2017pseudoinverse,
  title={Pseudoinverse of the {Laplacian} and best spreader node in a network},
  author={Van Mieghem, P. and Devriendt, K. and Cetinay, H.},
  journal={Physical Review E},
  volume={96},
  number={3},
  pages={032311},
  year={2017},
  publisher={APS}
}

@article{qiu2023inverse,
  title={Inverse All Shortest Path Problem},
  author={Qiu, Zhihao and Joki{\'c}, Ivan and Tang, Siyu and Noldus, Rogier and Van Mieghem, Piet},
  journal={IEEE Transactions on Network Science and Engineering},
  year={2023},
  publisher={IEEE}
}

@book{van2014performance,
  title={Performance analysis of complex networks and systems},
  author={Van Mieghem, P.},
  year={2014},
  publisher={Cambridge University Press},
  address={Cambridge, U.K.}
}

@article{dijkstra1959note,
  title={A note on two problems in connexion with graphs},
  author={Dijkstra, E. W.},
  journal={Numerische mathematik},
  volume={1},
  number={1},
  pages={269--271},
  year={1959}
}

@inproceedings{fortz2000internet,
  title={Internet traffic engineering by optimizing {OSPF} weights},
  author={Fortz, B. and Thorup, M.},
  booktitle={Proceedings IEEE INFOCOM 2000},
  volume={2},
  pages={519--528},
  year={2000},
  organization={IEEE}
}

@article{fiedler1998some,
  title={Some characterizations of symmetric inverse {M}-matrices},
  author={Fiedler, M.},
  journal={Linear algebra and its applications},
  volume={275},
  pages={179--187},
  year={1998},
  publisher={Elsevier}
}

@article{van2010framework,
  title={A framework for computing topological network robustness},
  author={Van Mieghem, P. and Doerr, C. and Wang, H. and Hernandez, J. Martin and Hutchison, D. and Karaliopoulos, M. and Kooij, R.E.},
  journal={Delft University of Technology, Report20101218},
  pages={1--15},
  year={2010}
}

@book{PVM_GraphSpectra2023,
    title = {{Graph Spectra for Complex Networks}},
    year = {2023},
    author = {Van Mieghem, P.},
    publisher = {Cambridge University Press},
    edition = {Second},}

@article{BOZZO2013460,
title = {Resistance distance, closeness, and betweenness},
journal = {Social Networks},
volume = {35},
number = {3},
pages = {460-469},
year = {2013},
issn = {0378-8733},
doi = {https://doi.org/10.1016/j.socnet.2013.05.003},
url = {https://www.sciencedirect.com/science/article/pii/S0378873313000488},
author = {Enrico Bozzo and Massimo Franceschet}
}

@article{barabasi2013network,
  title={Network science},
  author={Barab{\'a}si, Albert-L{\'a}szl{\'o}},
  journal={Philosophical Transactions of the Royal Society A: Mathematical, Physical and Engineering Sciences},
  volume={371},
  number={1987},
  pages={20120375},
  year={2013},
  publisher={The Royal Society Publishing}
}

@article{erdos1963asymmetric,
  title={Asymmetric graphs},
  author={Erdos, Paul and R{\'e}nyi, Alfr{\'e}d},
  journal={Acta Math. Acad. Sci. Hungar},
  volume={14},
  number={295-315},
  pages={3},
  year={1963}
}

@article{NEWMAN200539,
title = {A measure of betweenness centrality based on random walks},
journal = {Social Networks},
volume = {27},
number = {1},
pages = {39-54},
year = {2005},
issn = {0378-8733},
doi = {https://doi.org/10.1016/j.socnet.2004.11.009},
url = {https://www.sciencedirect.com/science/article/pii/S0378873304000681},
author = {M.E. J. Newman}
}

@article{pearson1905problem,
  title={The problem of the random walk},
  author={Pearson, Karl},
  journal={Nature},
  volume={72},
  number={1867},
  pages={342--342},
  year={1905},
  publisher={Nature Publishing Group UK London}
}

@article{MASUDA20171,
title = {Random walks and diffusion on networks},
journal = {Physics Reports},
volume = {716-717},
pages = {1-58},
year = {2017},
note = {Random walks and diffusion on networks},
issn = {0370-1573},
doi = {https://doi.org/10.1016/j.physrep.2017.07.007},
url = {https://www.sciencedirect.com/science/article/pii/S0370157317302946},
author = {Naoki Masuda and Mason A. Porter and Renaud Lambiotte}
}

@article{Bozzo2012Approximations,
author = {Enrico Bozzo and Massimo Franceschet},
title = {{Approximations of the Generalized Inverse of the Graph Laplacian Matrix}},
volume = {8},
journal = {Internet Mathematics},
number = {4},
publisher = {A K Peters, Ltd.},
pages = {456 -- 481},
year = {2012},
}

@article{akara-pipattana_resistance_2022,
    title = {Resistance distance distribution in large sparse random graphs},
    volume = {2022},
    url = {https://doi.org/10.1088/1742-5468/ac57ba},
    doi = {10.1088/1742-5468/ac57ba},
    number = {3},
    journal = {Journal of Statistical Mechanics: Theory and Experiment},
    author = {Akara-pipattana, Pawat and Chotibut, Thiparat and Evnin, Oleg},
    month = mar,
    year = {2022},
    note = {Publisher: IOP Publishing and SISSA},
    pages = {033404},
}

@article{freeman_centrality_1991,
    title = {Centrality in valued graphs: {A} measure of betweenness based on network flow},
    volume = {13},
    issn = {0378-8733},
    url = {https://www.sciencedirect.com/science/article/pii/037887339190017N},
    doi = {https://doi.org/10.1016/0378-8733(91)90017-N},
    number = {2},
    journal = {Social Networks},
    author = {Freeman, Linton C. and Borgatti, Stephen P. and White, Douglas R.},
    year = {1991},
    pages = {141--154},
}

@book{ross2014first,
  title={Introduction to probability models},
  author={Ross, Sheldon M},
  year={2014},
  publisher={Academic press}
}

@book{van2024random,
  title={Random graphs and complex networks},
  author={Van Der Hofstad, Remco},
  volume={2},
  year={2024},
  publisher={Cambridge university press}
}

@article{carmona2024stable,
  title={Stable recovery of piecewise constant conductance on spider networks},
  author={Carmona, {\'A}ngeles and Encinas, Andr{\'e}s M. and Jim{\'e}nez, Mar{\'\i}a Jos{\'e} and Samperio, {\'A}lvaro},
  journal={International Journal of Computer Mathematics},
  volume={101},
  number={11},
  pages={1237--1254},
  year={2024},
  publisher={Taylor \& Francis},
  doi={10.1080/00207160.2024.2385631}
}

@misc{arzobispado_mexico_plano,
  author       = {{Anonymous}},
  title        = {Plano del Arzobispado de M{\'e}xico},
  institution  = {Instituto Nacional de Antropolog{\'\i}a e Historia},
  address      = {Mexico},
  year         = {2018},
  note         = {Accessed 9 June 2018}
}

@article{corless1996lambert,
  title={On the {Lambert} {W} function},
  author={Corless, Robert M. and Gonnet, Gaston H. and Hare, David E. G. and Jeffrey, David J. and Knuth, Donald E.},
  journal={Advances in Computational Mathematics},
  volume={5},
  number={1},
  pages={329--359},
  year={1996},
  publisher={Springer},
  doi={10.1007/BF02124750}
}

@ARTICLE{1717439,
  author={Chiasserini, C.-F. and Garetto, M.},
  journal={IEEE Transactions on Mobile Computing}, 
  title={An Analytical Model for Wireless Sensor Networks with Sleeping Nodes}, 
  year={2006},
  volume={5},
  number={12},
  pages={1706-1718},
  doi={10.1109/TMC.2006.175}}

\end{document}